\colorlet{shadecolor}{yellow}
\newtheorem{theorem}{Theorem}
\lstdefinestyle{lststyle}{
  commentstyle=\color{green},
  keywordstyle=\color{magenta},
  numberstyle=\tiny\color{gray},
  stringstyle=\color{purple},
  basicstyle=\ttfamily\footnotesize,
  breakatwhitespace=false,
  breaklines=true,
  captionpos=b,
  frame=lines,
  keepspaces=true,
  numbers=left,
  numbersep=5pt,
  showspaces=false,
  showstringspaces=false,
  showtabs=false,
  tabsize=2
}
\definecolor{verylightgray}{rgb}{.97,.97,.97}
\lstdefinelanguage{Solidity}{
    keywords=[1]{anonymous, assembly, assert, balance, break, call, callcode, case, catch, class, constant, continue, constructor, contract, debugger, default, delegatecall, delete, do, else, emit, event, experimental, export, external, false, finally, for, function, gas, if, implements, import, in, indexed, instanceof, interface, internal, is, length, library, log0, log1, log2, log3, log4, memory, modifier, new, payable, pragma, private, protected, public, pure, push, require, return, returns, revert, selfdestruct, send, solidity, storage, struct, suicide, super, switch, then, this, throw, transfer, true, try, typeof, using, value, view, while, with, addmod, ecrecover, keccak256, mulmod, ripemd160, sha256, sha3}, 
    keywordstyle=[1]\color{blue}\bfseries,
    keywords=[2]{address, bool, byte, bytes, bytes1, bytes2, bytes3, bytes4, bytes5, bytes6, bytes7, bytes8, bytes9, bytes10, bytes11, bytes12, bytes13, bytes14, bytes15, bytes16, bytes17, bytes18, bytes19, bytes20, bytes21, bytes22, bytes23, bytes24, bytes25, bytes26, bytes27, bytes28, bytes29, bytes30, bytes31, bytes32, enum, int, int8, int16, int24, int32, int40, int48, int56, int64, int72, int80, int88, int96, int104, int112, int120, int128, int136, int144, int152, int160, int168, int176, int184, int192, int200, int208, int216, int224, int232, int240, int248, int256, mapping, string, uint, uint8, uint16, uint24, uint32, uint40, uint48, uint56, uint64, uint72, uint80, uint88, uint96, uint104, uint112, uint120, uint128, uint136, uint144, uint152, uint160, uint168, uint176, uint184, uint192, uint200, uint208, uint216, uint224, uint232, uint240, uint248, uint256, var, void, ether, finney, szabo, wei, days, hours, minutes, seconds, weeks, years}, 
    keywordstyle=[2]\color{teal}\bfseries,
    keywords=[3]{block, blockhash, coinbase, difficulty, gaslimit, number, timestamp, msg, data, gas, sender, sig, value, now, tx, gasprice, origin},   
    keywordstyle=[3]\color{violet}\bfseries,
    identifierstyle=\color{black},
    sensitive=false,
    comment=[l]{//},
    morecomment=[s]{/*}{*/},
    commentstyle=\color{gray}\ttfamily,
    stringstyle=\color{red}\ttfamily,
    morestring=[b]',
    morestring=[b]"
}
\begin{document}
\bstctlcite{IEEEexample:BSTcontrol}
    \title{Efficient Floating Point Arithmetic\\ for Quantum Computers\\}
    \author{
    \IEEEauthorblockN{
    Raphael Seidel\IEEEauthorrefmark{2},
    Nikolay Tcholtchev\IEEEauthorrefmark{2},
    Sebastian Bock\IEEEauthorrefmark{2},
    Colin Kai-Uwe Becker\IEEEauthorrefmark{2},
    Manfred Hauswirth\IEEEauthorrefmark{2}\IEEEauthorrefmark{1}\medskip }
    \IEEEauthorblockA{\IEEEauthorrefmark{2}Fraunhofer Institute for Open Communication Systems (FOKUS) \\ Berlin, Germany \\ \{firstname.lastname\}@fokus.fraunhofer.de}\\ \medskip
    \IEEEauthorblockA{\IEEEauthorrefmark{1}Technische Universität Berlin
    \vspace{-1cm}}
}





\maketitle

\definecolor{refblue}{RGB}{102, 102, 153}

\let\oldref\ref
\renewcommand{\ref}[1]{{\color{refblue}(\oldref{#1})}}
\let\oldcite\cite
\renewcommand{\cite}[1]{{\color{refblue}\oldcite{#1}}}



\begin{abstract}
\thispagestyle{empty}
One of the major promises of quantum computing is the realization of SIMD (single instruction - multiple data) operations using the phenomenon of superposition. Since the dimension of the state space grows exponentially with the number of qubits, we can easily reach situations where we pay less than a single quantum gate per data point for data-processing instructions which would be rather expensive in classical computing. Formulating such instructions in terms of quantum gates, however, still remains a challenging task. Laying out the foundational functions for more advanced data-processing is therefore a subject of paramount importance for advancing the realm of quantum computing. In this paper, we introduce the formalism of encoding so called-semi-boolean polynomials. As it turns out, arithmetic $\mathbb{Z}/2^n\mathbb{Z}$ ring operations can be formulated as semi-boolean polynomial evaluations, which allows convenient generation of unsigned integer arithmetic quantum circuits. For arithmetic evaluations, the resulting algorithm has been known as Fourier-arithmetic. We extend this type of algorithm with additional features, such as ancilla-free in-place multiplication and integer coefficient polynomial evaluation. Furthermore, we introduce a tailor-made method for encoding signed integers succeeded by an encoding for arbitrary floating-point numbers. This representation of floating-point numbers and their processing can be applied to any quantum algorithm that performs unsigned modular integer arithmetic. We discuss some further performance enhancements of the semi boolean polynomial encoder and finally supply a complexity estimation. The application of our methods to a 32-bit unsigned integer multiplication demonstrated a 90\% circuit depth reduction compared to carry-ripple approaches.
\end{abstract}

\begin{IEEEkeywords}
quantum computing, quantum arithmetic, floating point arithmetic
\end{IEEEkeywords}

%
\IEEEpeerreviewmaketitle



\section{Introduction}
\label{sec:introduction}
It goes without doubt that the success of classical computers heavily relies on their ability to perform arithmetic evaluations. One might argue that the unique nature of quantum computers works better for fundamentally different approaches to information processing like the prominent variational quantum algorithms \cite{Cerezo2021}. However, the following example shows that once reliable hardware is available, quantum computers can surpass classical computers regarding the FLOPS metric\footnote{FLOPS stands for \textbf{fl}oating point \textbf{o}perations \textbf{p}er \textbf{s}econd, which is a measure for the speed of a computation device}.\\
To elaborate on this claim, consider the situation that an arbitrary quantum algorithm requires the multiplication of two numbers. For this, we assume the following partition of quantum registers:
\begin{itemize}
\item Factor 1 register
\item Factor 2 register
\item Target register
\item Miscellaneous register
\end{itemize}
The factor registers hold the information corresponding to each factor; the target register stores the multiplication results. The miscellaneous register is an $m$ qubit register which holds any other information that the algorithm produced so far.\\
We assume that, before the multiplication, the quantum computer is in the state
\begin{align}
\ket{\psi} = \sum_{i = 0}^{2^m-1} a_i \ket{x_i}\ket{y_i}\ket{0}\ket{i}.
\end{align}
Applying the multiplication circuit to this state let us estimate how many multiplications we can execute simultaneously:
\begin{align}
U_{\text{mult}} \ket{\psi} = \sum_{i = 0}^{2^m-1} a_i \ket{x_i}\ket{y_i}\ket{x_i\cdot y_i}\ket{i}.
\end{align}
Since we evaluated one multiplication per index $i$, we are in the best case at $2^m$ multiplications! The resource requirements of the multiplication circuit do not grow with $m$, thus we conclude that this situation can easily lead to the case, where multiplications cost less than a single gate! This could be regarded as an "inverse curse of dimensionality"\footnote{"Curse of dimensionality" is a term coined by mathematician Richard Bellman, which describes the exponential growth of computation and data resources for higher-dimensional data}.\\
Another interesting estimation we can make using this insight is to derive how many qubits are required to bring today's exascale ($10^{18}$ FLOPS) computing to the zettascale ($10^{21}$ FLOPS). For this, we assume an entangling gate speed of $\approx 10^{-6}\text{s}$ \cite{Schaefer2018} and a 32-bit multiplication circuit depth\footnote{To define the depth of a circuit $C$, we divide $C$ into a sequence of timesteps. During each time step, each qubit can execute at most one elementary gate. The circuit depth is then defined as the total amount of time steps required to complete the circuit. Note that this definition is highly dependent on the set of elementary gates. For our benchmarks we used the gate set $\{\text{CX}, \text{RZ}, \text{SX}\}$.} $\approx 10^4$ (see fig. \ref{performances_plots}) yielding a processing time of $\approx 10^{-2}\text{s}$ per multiplication circuit execution. This implies that we need a superposition of $10^{19}$ computational basis states, which is reached at $m = 65$ qubits. Together with the $32 \cdot 3 = 96$ qubits from the factor and target registers and an additional $\approx 32$ ancilla qubits (which are required to steeply increase multiplication evaluation speed), we arrive at $\approx 200$ qubits.\\
This is however highly dependent on how the quantum algorithm makes use of this multiplication. An example of an application that has no such advantage would be given by a simple calculator application for humans to use. Here, we can only multiply one pair of factors at the same time, implying no such speed up. Furthermore, in a realistic scenario, the need for reliable gates and stable qubits implies a quantum error correction scheme which will further increase the required qubit count and gate speed.\\
Nevertheless, this example demonstrates that once the right conditions are met, quantum computing and especially quantum arithmetic can truly supercharge humanity's information processing volume, rendering this field into a fascinating area of research.\\
This paper is structured as follows:\\
In what is left of section \ref{sec:introduction} we will give a short overview of the basics of classical arithmetic and how it translates to quantum computers.\\
In section \ref{sec:fundamentals} we will define and discuss the fundamentals which are necessary to construct the semi-boolean polynomial encoder. \\
Section \ref{sec:arithmetic} introduces arithmetic operations as applications of the previously constructed encoding circuit. The remainder of this section discusses how signed and non-integer values can be treated.\\
The following section \ref{sec:in_place_ops} constructs several methods to perform in-place operations (ie. addition, multiplication, polynomial evaluations etc.).\\
After this, in section \ref{sec:performance} we discuss several techniques to improve the performance of the semi-boolean polynomial encoder followed by a complexity estimation.\\
Finally, in section \ref{sec:critical_points} the developed methods are critically reviewed, followed by a summary of our results in section \ref{sec:conclusions}.

\subsection{Overview}
\label{sec:overview}
This section aims to provide the reader with some basic information about how arithmetic operations are traditionally performed on classical computers and how these methods translate to quantum computers.\\
Most of today's implementations of arithmetic are performed in the binary system. Not only does this suit the inner workings of a computer but it also minimizes the complexity of the elementary building blocks of arithmetic circuits. To perform an addition the most straight-forward approach is to deploy a chain of so-called \textit{full adders} which perform the basic operation of vertical addition in binary eg.:\\
\begin{center}
\begin{tabular}{ccccc}
  & 1 & 0 & 1 & 0\\
+ &   & 1 & 1 & 1\\
\tiny{1}&\tiny{1}&\tiny{1} & &\\
\hline
1 & 0 & 0 & 0 & 1\\ \\
\end{tabular}
\end{center}
We implement a full adder for every digit position of both summands, which determines the sum digit and the next carry digit depending on both the summand digits and the carry digit. An algorithmic formulation can be found in algorithm \ref{alg:addition}.
\begin{algorithm}
	\caption{\label{alg:addition}Addition} 
	\hspace*{\algorithmicindent}\textbf{Input:} Binary strings of the summands $(x_i)_{i \leq n}, (y_i)_{i \leq n}$\\
	\hspace*{\algorithmicindent}\textbf{Output:} Binary string of the sum $x + y$
	\begin{algorithmic}[1]
		\State $s$ = 0
		\State $c_{\text{in}} = 0$
		\For {$i$ in $(0,1,2.. n)$}
			\State $c_{\text{out}}, s_i = \text{FullAdder}(x_i, y_i, c_{\text{in}})$
			\State $c_{\text{in}} = c_{\text{out}}$
		\EndFor
		\State $s_{n+1} = c_{\text{out}}$
		\State \textbf{return} $s$
	\end{algorithmic} 
\end{algorithm}
Addition techniques like this are often called \textit{carry-ripple} because in this approach certain additions can create a ripple of carries that propagates to the most significant position. Note that for adding two $n$-bit integers $N_1, N_2$, we need to execute $n$ full adders, implying a complexity of $\mathcal{O}(\text{log}(N))$.\\
In base 2 the full adder can be represented by a truth table:\\
\begin{center}
\begin{tabular}{ccc|cc}
$x_i$ & $y_i$ & $c_{\text{in}}$ & $c_{\text{out}}$ & $s_i$\\
\hline
0&0&0&0&0\\
0&0&1&0&1\\
0&1&0&0&1\\
0&1&1&1&0\\
1&0&0&0&1\\
1&0&1&1&0\\
1&1&0&1&0\\
1&1&1&1&1\\ \\
\end{tabular}
\end{center}
Note that we can't infer the constellation of $x_i$ and $c_{\text{in}}$ in the case that we just know $y_i = 1, c_{\text{out}} = 1, s_i = 0$, implying $\ket{x_i}\ket{y_i}\ket{c_{\text{in}}} \rightarrow \ket{s_i}\ket{y_i}\ket{c_{\text{out}}}$ is not reversible. From this, we conclude that there is no (ancilla free) quantum circuit that can perform this operation\footnote{Since every quantum gate can be represented by a unitary (and therefore invertible) operator, any sequence of such gates has to be invertible}. This is a bit inconvenient because especially for multiplications (which will be covered shortly), having an in-place addition is vital. This problem was however successfully addressed by Cucarro et al. in \cite{Cuccaro2004}. For their approach, they realized that for $x,y \in \mathbb{N}$ the in-place addition $\ket{x}\ket{y}\ket{0}_{s_{n+1}} \rightarrow \ket{x}\ket{x+y}$ by itself is indeed reversible implying that the structure of successive full-adders might be the problem. Therefore their new approach consisted of calculating a set of intermediate truth values (with so-called MAJ gates) for each bit which are then again processed (in reversed order) by a sequence of so-called UMA gates resulting in a characteristic V-shape of the belonging circuits. Even though this circuit design is cheap in qubits and gates, the V-shape prevents parallel execution of most of the gates which results in sub-optimal circuit depth.\\ \\
Evaluating multiplications is simple once we have access to in-place addition. This is because the product of any binary number $y$ with another number $y$ with just a single 1 (for instance $y = (100)_2$) is simply $x$ but \textit{bit shifted}\footnote{A bit shift is an operation which moves a string of bits into a certain direction. Left shifts are denoted using the operator $\ll$, whereas right shifts are written as $\gg$. The amount of the shift is determined by the second operand. So for instance $(1010)_2 \ll 3 = (1010000)_2$} depending on where $y$ had it's 1:
\begin{align}
12 \cdot 4 = (1100)_2 \cdot (100)_2 = (1100)_2 \ll 2 = (110000)_2 = 48.
\end{align}
If we now want to relax the restriction of the second factor $y$ having only a single 1, we can write $y$ as a sum of single-1-numbers 
\begin{align}
y = a_0 + a_1 + .. a_k,
\end{align}
for instance $(1010)_2 = (1000)_2 + (10)_2$ and then calculate
\begin{align}
\label{multiplication_example_eq}
x \cdot y = \sum_{i = 0}^k x \cdot a_i.
\end{align}
Here we can see why an in-place adder is so important: Without it, we would have to store (and uncompute) at least $k$ different numbers (one for each iteration of the sum).\\
We rewrite eq. \ref{multiplication_example_eq} in a more algorithmic manner
\begin{align}
\label{algorithmic_mult_eq}
x  y = \sum_{i = 0}^{n} (x\ll i) y_i.
\end{align}
Where $n$ is the bit size of $y$, $y_i \in \mathbb{F}$ is the truth value of the $i$-th digit of $y$ in binary. From this, we formulate algorithm \ref{alg:multiplication}.

\begin{algorithm}

	\caption{\label{alg:multiplication} Multiplication 1} 
	\hspace*{\algorithmicindent}\textbf{Input:} Binary strings of the factors $(x_i)_{i \leq n}, (y_i)_{i \leq n}$\\
	\hspace*{\algorithmicindent}\textbf{Output:} Binary string of the product $x \cdot y$
	\begin{algorithmic}[1]
		\State $s = 0$
		\For {$i$ in $(0,1,2.. n)$}
			\If{$y_i$}:
				\State $s += (x \ll i)$
			\EndIf
		\EndFor

		\State \textbf{return} $s$
	\end{algorithmic} 
\end{algorithm}
We see that for the multiplication of two $n$-bit integers $N_1, N_2$ the algorithm requires $\mathcal{O}(n)$ additions, implying a complexity of $\mathcal{O}(\text{log}(N)^2)$.\\ 
Furthermore, we note that for the case of quantum computers, the bit shift doesn't have to be performed physically (ie. through swaps) - it is enough to rewire the in-place addition circuit. Regarding the conditional execution of the in-place addition, there are multiple possibilities: We could either turn every gate of the addition into its controlled version (for instance using \cite{shende2006}) or only control certain key gates of the adder as demonstrated in \cite{Thapliyal2021}. Even though much cheaper in gate overhead, this only works for certain kinds of addition circuits. A third possibility, which we used to make the multiplication performance of various adders comparable to our methods (see fig. \ref{performances_plots}), compromises the best of both worlds. For this, we use the well-known \cite{Thapliyal2021} identity\begin{align}
a - b = \overline{(\overline{a} + b)}
\end{align}
where the $\overline{a}$ is the bitwise negation of $a$. Since we can easily perform conditional bitwise negation using multiple CNOT gates, this allows for a versatile and efficient realization of carry-ripple adder-based multiplication circuits. To see how this can be used, we reformulate eq. \ref{algorithmic_mult_eq}:
\begin{equation}
\begin{aligned}
x y &= \sum_{i = 0}^{n} (x\ll i) y s_i\\
&= \sum_{i = 0}^{n} (x\ll i) \frac{1 - (-1)^{y_i}}{2}\\
&= \frac{1}{2}\left(\sum_{i = 0}^{n} (x\ll i) - \sum_{i = 0}^{n} (x\ll i) (-1)^{y_i}\right)\\
&= \frac{1}{2}\left(x \sum_{i = 0}^{n} 2^{i} - \sum_{i = 0}^{n} (x\ll i) (-1)^{y_i}\right)\\
&= \frac{1}{2}\left(x \frac{2^{n+1} - 1}{2-1} - \sum_{i = 0}^{n} (x\ll i) (-1)^{y_i}\right)\\
&= \left(x \ll (n + 1) - x - \sum_{i = 0}^{n} (x \ll i) (-1)^{y_i}\right) \gg 1
\end{aligned}
\end{equation}
From this, we give the modified version of algorithm \ref{alg:multiplication}:
\begin{algorithm}

	\caption{\label{alg:multiplication2} Multiplication 2} 
	\hspace*{\algorithmicindent}\textbf{Input:} Binary strings of the factors $(x_i)_{i \leq n}, (y_i)_{i \leq n}$\\
	\hspace*{\algorithmicindent}\textbf{Output:} Binary string of the product $x \cdot y$
	\begin{algorithmic}[1]
		\State $s = x \ll (n + 1)$
		\State $s -= x$
		\For {$i$ in $(0,1,2.. n)$}
			\If{$y_i$}:
				\State $s += (x \ll i)$
			\Else:
				\State $s -= (x \ll i)$
			\EndIf
		\EndFor

		\State \textbf{return} $s \gg 1$
	\end{algorithmic} 
\end{algorithm}

Even though far from complete, this concludes the short overview on standard methods for arithmetic evaluations. In the upcoming sections, we will see a very different approach, which has its own benefits and drawbacks compared to the methods we presented so far.

\section{Fundamentals}
\label{sec:fundamentals}
In this section, we will define the concept of semi-boolean polynomials and lay out the necessary techniques to encode their evaluation into circuits.
\subsection{Modular arithmetic}
A central part of the methods that are described below is modular arithmetic, therefore we will provide a short coverage of the belonging basics.\\ \\
For $x \in \mathbb{Z},\text{ }y \in \mathbb{N}$, the modulo operator \textbf{mod} maps to the smallest positive number $z$ such that
\begin{align}
z = x + jy,
\end{align}
where $j \in \mathbb{Z}$ is an integer. For instance, we have $7\text{ mod }3 = 1$ ($j$ would be equal to $2$ in this case).
Given an integer $n \in \mathbb{N}$, we can construct a ring\footnote{Simply put, a ring is a field without division.} by applying the modulus function to the set of integers. This ring is denoted as follows:
\begin{align}
\mathbb{Z}/n\mathbb{Z} := \mathbb{Z} \text{ mod } n.
\end{align}
For $a,b \in \mathbb{Z}/n\mathbb{Z}$ the arithmetic operations in $\mathbb{Z}/n\mathbb{Z}$ are realized by applying $\text{mod } n$ after their respective $\mathbb{Z}$-operations. For example:
\begin{align}
\text{mul}_{\mathbb{Z}/n\mathbb{Z}}(a, b) = \text{mul}_{\mathbb{Z}}(a, b)\text{ mod }n,\\
\text{add}_{\mathbb{Z}/n\mathbb{Z}}(a,b) = \text{add}_{\mathbb{Z}}(a, b)\text{ mod }n.
\end{align}
\subsection{Semi-Boolean Polynomials}
\label{sec:sbp_encoder}
Semi-boolean polynomials will play an important role in the course of this paper. The basic idea is to use a technique described in \cite{Gilliam2021groveradaptive} to encode a certain type of polynomial. Although being unnamed by the authors of \cite{Gilliam2021groveradaptive}, we will denote these polynomials as \textit{semi-boolean} to differentiate between polynomials with more general domains. A \textit{semi-Boolean polynomial} (from now on: \textit{SB-polynomial}) is a multivariate polynomial that has Boolean tuples as its domain but arbitrary real numbers as coefficients:
\begin{align}
\Omega : \mathbb{F}_2^n \rightarrow \mathbb{R}.
\end{align}
Furthermore, we define an \textit{integer SB-polynomial} as an SB-polynomial which has only integer coefficients. An example would be given by \ref{sbp_example_eq} where $x_0, x_1$ and $x_2$ denote boolean variables:
\begin{align}
\label{sbp_example_eq}
\Omega(x) = 4 x_0 x_2 - 3 x_1.
\end{align}
Integer SB-polynomial evaluations can be encoded into circuits i.e., for two registers (domain and image) of size $n,m \in \mathbb{N}$ respectively and a given integer SB-polynomial $\Omega$, this encoding acts as
\begin{align}
\label{sb_encoder_definition}
U_{\text{sbp}}(\Omega) \ket{x} \ket{0} = \ket{x}\ket{\Omega(x) \text{ mod } 2^m}.
\end{align}
How does this work? The idea is to construct the QFT of the state $\ket{\Omega(x)}$ followed by an inverse QFT. Encoding the Fourier-transform of $\ket{\Omega(x)}$ is advantageous because of the additive nature of phase gates.\\
We will first lay out the procedure of constructing the Fourier transformed state without any domain register. The encoding circuit will then follow by turning some of the gates into controlled operations.\\ \\
Suppose we want to encode the Fourier transform of the state $\ket{y}$, where $y$ is some integer. The first step is to initialize the image register into the state of uniform superposition by applying H gates on every qubit:
\begin{align}
\ket{s} = \frac{1}{\sqrt{2^m}}\sum_{k = 0}^{2^m-1} \ket{k}
\end{align}
We now define the gate
\begin{align}
U_{\mathrm{G}}(y) = \bigotimes_{i = 0}^{m-1} \text{P}_i\left(\frac{2\pi y 2^i}{2^m}\right),
\end{align}
where $\text{P}_i(\phi)$ is the parametrized phase gate, $\text{diag}(1,\text{exp}(i\phi))$ applied on the $i$-th qubit of the register. Applying $U_{\mathrm{G}}(y)$ to the uniform superposition state gives
\begin{equation}
\label{fourier_trafoed_y}
\begin{aligned}
&U_{\mathrm{G}}(y)\ket{s}\\
= &\frac{1}{\sqrt{2^m}} \sum_{k = 0}^{2^m-1} \left(\prod_{j = 0}^{m-1} \text{exp}\left(\frac{2\pi iy 2^jk_j}{2^m}\right)\right)\ket{k}\\
=&\frac{1}{\sqrt{2^m}} \sum_{k = 0}^{2^m-1} \text{exp}\left(\frac{2\pi i y}{2^m}\sum_{j = 0}^{m-1} 2^j k_j\right) \ket{k}\\
= &\frac{1}{\sqrt{2^m}} \sum_{k = 0}^{2^m-1} \text{exp}\left(\frac{2\pi i yk}{2^m}\right) \ket{k}.	
\end{aligned}
\end{equation}
Using the conventions from Nielsen \& Chuang chapter 5.1 \cite{nielsen00} this is the Fourier transform of $\ket{y}$ implying:
\begin{align}
\label{U_G_action}
\text{QFT}^\dagger U_{\mathrm{G}}(y)\mathrm{H}^{\otimes n} \ket{0} = \ket{y\text{ mod }2^{m}}.
\end{align}
We inserted the modulus because if  $2^{m}\leq y = (y\text{ mod }2^m) + j 2^{m}$ in eq. \ref{fourier_trafoed_y}, the phase corresponding to $j 2^{m}$ results in an integer multiple of $2\pi$ and therefore vanishes. Note that $U_{\mathrm{G}}$ is additive in the sense that
\begin{equation}
\begin{aligned}
& U_{\mathrm{G}}(y_1)U_{\mathrm{G}}(y_2)\\
=& \left(\bigotimes_{i = 0}^{m-1} \text{P}_i\left(\frac{2\pi y_1 2^i}{2^m}\right)\right) \left(\bigotimes_{i = 0}^{m-1} \text{P}_i\left(\frac{2\pi y_2 2^i}{2^m}\right)\right)\\
=& \bigotimes_{i = 0}^{m-1} \left( \text{P}_i\left(\frac{2\pi y_1 2^i}{2^m}\right) \text{P}_i\left(\frac{2\pi y_2 2^i}{2^m}\right)\right)\\
 = &\bigotimes_{i = 0}^{m-1} \text{P}_i\left(\frac{2\pi 2^i (y_1+y_2)}{2^m}\right).
\end{aligned}
\end{equation}
Hence, we can conclude that:
\begin{align}
\label{U_G_additivity}
U_{\mathrm{G}}(y_1) U_{\mathrm{G}}(y_2) = U_{\mathrm{G}}(y_1 + y_2).
\end{align}
The next step for encoding SB-polynomials is to turn $U_{\mathrm{G}}$ into controlled gates. If we for instance, control $U_{\mathrm{G}}(y)$ on the 0-th qubit of the domain register, we can abuse our notation and write $U_{\mathrm{G}}(yx_0)$ because if $x_0 = 0$, the gate is not executed, which is equivalent to applying $U_{\mathrm{G}}(0)$. If $x_0 = 1$, $U_{\mathrm{G}}(y)$ is executed which is also equivalent to $U_{\mathrm{G}}(yx_0)$. The same works with more factors of $x_i$ by controlling $U_{\mathrm{G}}$ on the corresponding qubits. We, therefore, are now able to encode SB-monomials. For instance,
\begin{align}
\text{QFT}^\dagger U_{\mathrm{G}}(4 x_0 x_1 x_2)\ket{x}\ket{s} = \ket{x}\ket{(4 x_0 x_1 x_2)\text{ mod } 2^m}.
\end{align}
The last step is now simple. To encode an arbitrary SB-polynomial
\begin{align}
\Omega(x) = \sum_{i = 0}^l a_i \prod_j x_{b_{ij}},
\end{align}
we execute the circuit
\begin{align}
U_{\text{sbp}}(\Omega) = \text{QFT}^\dagger \prod_{i = 0}^l U_{\mathrm{G}}\left( a_i \prod_j x_{b_{ij}} \right)\mathrm{H}^{\otimes n}
\end{align}
Using eq. \ref{U_G_additivity} we end up with
\begin{align}
\label{sbp_encoder_circuit}
U_{\text{sbp}}(\Omega) = \text{QFT}^\dagger U_{\mathrm{G}}(\Omega(x)) \mathrm{H}^{\otimes n},
\end{align}
which combined with eq. \ref{U_G_action} yields the functionality specified in eq. \ref{sb_encoder_definition}.\\
Note that we abused our notation of SB-polynomials here because usually the expression $\Omega(x)$ denotes a polynomial evaluated at some value $x$. Throughout the rest of this paper, if such an expression appears inside of the SB-polynomial encoder, the symbol $x$ denotes the register on which the $U_{\mathrm{G}}$ gates are controlled.

\begin{figure}
\scalebox{0.8}{
\begin{quantikz}
\lstick[wires = 3]{$\ket{x}$} 
\qw 	& \qw 		&\ctrl{3}			&\qw 				&\ctrl{3} 			&\qw 							& \qw \\
\qw 	& \qw 		&\qw				&\ctrl{4}			&\ctrl{3}	 		&\qw 							& \qw \\
\qw 	& \qw 		&\qw				&\ctrl{3} 			&\qw 				&\qw 							& \qw \\
\lstick[wires = 3]{$\ket{0}^{\otimes 3}$} 
\qw 	& \gate{\mathrm{H}} 	& \gate[3]{U_{\mathrm{G}}(1)} 	& \gate[3]{U_{\mathrm{G}}(2)}	&\gate[3]{U_{\mathrm{G}}(3)}	&\gate[3]{\text{QFT}^\dagger} 	& \qw \\
\qw 	& \gate{\mathrm{H}}	&	 				& \qw				&\qw	 			& 								& \qw\\
\qw 	& \gate{\mathrm{H}} 	& 					& \qw 				& \qw		 		&								& \qw 
\end{quantikz}}
\caption{SB-polynomial encoding of $\Omega(x) = x_0  + 2 x_1 x_2 + 3 x_0 x_1$.}
\end{figure}
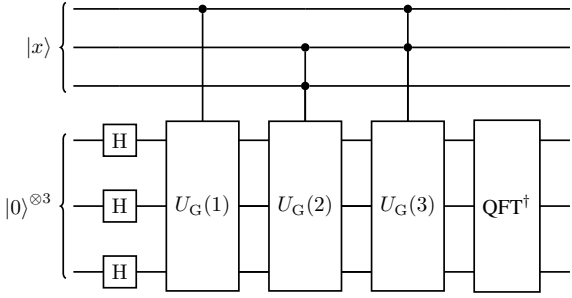

\section{Arithmetic}
\label{sec:arithmetic}
With the mathematical fundamentals and the SB-polynomial encoder at hand, we can now start constructing arithmetic circuits.
\subsection{Unsigned integer arithmetic}
Consider now 3 quantum registers with size $n_1, n_2, m \in \mathbb{N}$ and the integers $x < 2^{n_1}$ and $y < 2^{n_2}$.
The goal of this section is to construct circuits that perform the following operations:	
\begin{align}
U_{\text{add}} \ket{x}\ket{y} \ket{0} = \ket{x}\ket{y} \ket{(x+y)\text{ mod }2^m},\\
U_{\text{sub}} \ket{x}\ket{y} \ket{0} = \ket{x}\ket{y} \ket{(x-y)\text{ mod }2^m},\\
U_{\text{mult}} \ket{x}\ket{y} \ket{0} = \ket{x}\ket{y} \ket{(xy)\text{ mod }2^m}.
\end{align}
Additionally, we will construct a technique that can encode an arbitrary multivariate integer coefficient polynomial (not semi-boolean!) $p(x,y,z,..)$, such that
\begin{equation}
\begin{aligned}
\label{poly_encoder}
&U_{\text{poly}}(p) \ket{x} \ket{y} \ket{z} .. \ket{0}\\
=& \ket{x} \ket{y} \ket{z} .. \ket{p(x,y,z,..)\text{ mod }2^m}.
\end{aligned}
\end{equation}
The key insight in finding these circuits is that the evaluation of the binary representation of $x$ to its value is an SB-polynomial:
\begin{align}
\label{binary_expansion}
x = \sum_{k = 0}^{n_1 - 1} 2^k x_k.
\end{align}
We formalize this idea by defining the $n$-bit unsigned integer encoding SB-polynomial
\begin{equation}
\label{sb_poly_binary_exp}
\begin{aligned}
\Omega^n_{\text{us}}: \mathbb{F}^n_2 &\rightarrow \mathbb{N},\\
(x_0, x_1, .. ,x_n) &\rightarrow \sum_{k = 0}^{2^n-1} 2^k x_k.
\end{aligned}
\end{equation}
The next step is to write the arithmetic evaluations in terms of these SB-polynomials. This is not hard: The SB-polynomial for the addition is simply the sum of the corresponding SB-polynomials.
From this, we conclude that the conversion from the binary representations of $x$ and $y$ to the value of $x+y$ can also be written as a semi-Boolean polynomial:
\begin{equation}
\begin{aligned}
x + y &= \sum_{k = 0}^{n_1-1} 2^k x_k +  \sum_{k = 0}^{n_2-1} 2^k y_k\\
&= \Omega_{\text{us}}^{n_1}(x_0, x_1, .. ,x_n) + \Omega_{\text{us}}^{n_2}(y_0, y_1, .. ,y_n).
\end{aligned}
\end{equation}
Similarly, the difference and the product can also be written as the evaluation of SB polynomials. Our arithmetic circuits therefore simply arise from substituting the representation from eq. \ref{binary_expansion} into the arithmetic operation and afterward encoding the resulting semi-Boolean polynomial by using the encoding circuit we constructed in section \ref{sec:fundamentals} i.e.:
\begin{align}
U_{\text{add}} = U_{\text{sbp}}(\Omega_{\text{us}}^{n_1}(x) + \Omega_{\text{us}}^{n_2}(y)),\\
U_{\text{sub}} = U_{\text{sbp}}(\Omega_{\text{us}}^{n_1}(x) - \Omega_{\text{us}}^{n_2}(y)),\\
U_{\text{mult}} = U_{\text{sbp}}(\Omega_{\text{us}}^{n_1}(x)\Omega_{\text{us}}^{n_2}(y)).
\end{align}
The circuits that come out of this algorithm have been known as Fourier-arithmetic and have been studied in \cite{Draper2000}, \cite{Ruiz-Perez2017}.\\
For an arbitrary, multivariate, integer coefficient polynomial $p: \mathbb{R}^k \rightarrow \mathbb{R}$, we can use that $p(\Omega_{\text{us}}^{n_1}(x), \Omega_{\text{us}}^{n_2}(y),\Omega_{\text{us}}^{n_3}(z),..)$ is again an SB-polynomial which can therefore be encoded using eq. \ref{sb_encoder_definition}:
\begin{equation}
\label{arbtitrary_polynomial_encoder}
U_{\text{poly}}(p) = U_{\text{sbp}}(p(\Omega_{\text{us}}^{n_1}(x), \Omega_{\text{us}}^{n_2}(y),\Omega_{\text{us}}^{n_3}(z),..)).
\end{equation}
Note that this circuit evaluates $p$ without the computation of any intermediate values making it very light on the qubit count.\\

\subsection{Signed Integer Arithmetic}
\label{signed_integer_arithmetic}
The above discussions construct the simple case of executing basic arithmetic on unsigned integers. However, it cannot deal with signed integers and the belonging operations. To treat negative values, the first question which arises is how to represent them. The most basic approach would be to add a sign bit $x_s$ which then indicates the sign: 
\begin{align}
\text{sign}(x) = 
\begin{cases}
1 & \text{ if } x_s = 0\\
-1 & \text{ if } x_s = 1
\end{cases}.
\end{align}
In principle this approach might be a suitable approach for our techniques, however, a more detailed analysis uncovers many disadvantages for implementation on a quantum computer. For the multiplication of two numbers, is be straightforward to determine the sign bit of the result. The sign bit of the product can be determined by applying CNOT gates controlled on the sign bits of the factors. For sums the situation is already much more complicated: Imagine adding two numbers $x,y$, where one is positive and the other is negative. The sign of the sum depends on whether the $|x| \geq |y|$. Even though a comparison like this would in principle be possible to implement on a quantum computer \cite{Oliveira2007}, it is still very unwieldy. Furthermore, evaluating polynomials like eq. \ref{poly_encoder} would not be possible anymore, because every single monomial needs to be computed separately to determine its absolute value.\\
Given the above considerations, we pick a different encoding for negative numbers, which makes use of the modular structure\footnote{Modular structure means that every arithmetic operation is performed mod $2^n$ because of eq. \ref{U_G_action}} of the unsigned integer arithmetic. With this approach, there is no need to construct extra circuits for the signed integer arithmetic and every operation for the unsigned case, directly translates to the signed case. The one thing which does change is the de/encoding of the corresponding numbers. To see how this works, we begin by defining the \textit{n-bit signed integer encoding ring isomorphism}
\begin{equation}
\label{iso_def}
\begin{aligned}
[.]_n:-\mathbb{Z}/2^{n}\mathbb{Z} \cup \mathbb{Z}/2^{n}\mathbb{Z} \rightarrow \mathbb{Z}/2^{n+1}\mathbb{Z},\\
x \rightarrow 
\begin{cases} 
x &\text{if } x \geq 0\\
2^{n+1} - |x| &\text{else}
\end{cases}.
\end{aligned}
\end{equation}
An intuitive understanding of what this does can be gained from viewing $\mathbb{Z}/12\mathbb{Z}$ as a clock-face. In this context, the numbers from $-6$ to $6$ are being represented on the clock. Positive values like $3$ or $4$ are being represented by themselves, whereas negative values are being represented by moving the pointer counterclockwise. We give some examples:
\begin{align}
[0]_3 &= 0,\\
[3]_3 &= 3,\\
[-4]_3 &= 16 - 4 = 12.
\end{align}
If we now want to execute an arithmetic operation on a signed integer $x$, we encode $[x]_n$ and execute the unsigned operation. If at some point we want to extract the results, we measure and simply apply the inverse of eq. \ref{iso_def}
\begin{align}
\label{inv_iso}
[y]_n^{-1} = \begin{cases}
y &\text{ if } y < 2^n\\
y - 2^{n+1} &\text{ else }
\end{cases},
\end{align}
for this formalism to recover the signed result. We will now provide proof of why this works for handling signs. What is required to show is, that for two $n$-bit signed integers $x,y$, regardless of the constellation of signs, the following identities hold:
\begin{align}
[x + y]_n \text{ mod } 2^{n+1} = ([x]_n + [y]_n) \text{ mod } 2^{n+1},\\
[x - y]_n \text{ mod } 2^{n+1} = ([x]_n - [y]_n) \text{ mod } 2^{n+1},\\
\label{signed_product_homomorphism}
[xy]_n \text{ mod } 2^{n+1} = [x]_n [y]_n \text{ mod } 2^{n+1}.
\end{align}
Let $x<0$ and $y\geq 0$ such that the multiplication does not result in an overflow, i.e. $|xy|$ can still be represented by an $n$-bit integer - otherwise increase $n$:
\begin{equation}
\begin{aligned}
[x]_n [y]_n\text{ mod } 2^{n+1} &= (2^{n+1} - |x|)y \text{ mod } 2^{n+1}\\
&= (2^{n+1}y - |x|y) \text{ mod } 2^{n+1}\\
&= (-|x|y) \text{ mod } 2^{n+1}\\
&= (2^{n+1} - |x|y) \text{ mod } 2^{n+1}\\
&= (2^{n+1} - |xy|) \text{ mod } 2^{n+1}\\
&= [xy]_n \text{ mod } 2^{n+1}.
\end{aligned}
\end{equation}
Here, we used that $2^{n+1}y$ is either $0$ or an integer multiple of $2^{n+1}$ and thus vanishes after applying the modulus. In the following step, we added $2^{n+1}$ knowing that it also vanishes because of the modulus. The case where both $x$ and $y$ are non-negative is trivial because this is just unsigned integer multiplication. What remains to be shown, is the case where both $x$ and $y$ are negative:
\begin{equation}
\begin{aligned}
&[x]_n [y]_n\text{ mod } 2^{n+1}\\
 &= (2^{n+1} - |x|)(2^{n+1} - |y|) \text{ mod } 2^{n+1}\\
&= (2^{2(n+1)}-2^{n+1}(|x| + |y|) + |x||y|)\text{ mod } 2^{n+1}\\
&= |x||y|\text{ mod } 2^{n+1}\\
&= xy \text{ mod } 2^{n+1}\\
&= [xy]_n \text{ mod } 2^{n+1}.
\end{aligned}
\end{equation}
In some situations, the following formula comes in more handy than eq. \ref{iso_def}:
\begin{align}
[x]_n = x \text{ mod } 2^{n+1}.
\end{align}
One such situation is proving the addition-equivalent of the above identities. Let $x,y$ be arbitrary signed $n$-bit integers
\begin{equation}
\begin{aligned}
[x + y]_n \text{ mod } 2^{n+1} &= (x + y) \text{ mod } 2^{n+1}\\
& = (x \text{ mod } 2^{n+1} + y \text{ mod } 2^{n+1}) \text{ mod } 2^{n+1}\\
& = ([x]_n + [y]_n) \text{ mod } 2^{n+1}.
\end{aligned}
\end{equation}
The case of subtraction works the same. With this at hand, we can describe the signed integer arithmetic circuits.\\
For this, we assume for now, that all participating registers have size $n$. We will shortly lift this restriction. Furthermore, let $x,y,z$ again be signed $n$-bit integers
\begin{equation}
\begin{aligned}
&U_{\text{add}} \ket{[x]_n} \ket{[y]_n} \ket{0}\\
&= \ket{[x]_n} \ket{[y]_n} \ket{([x]_n + [y]_n) \text{ mod } 2^{n+1}}\\
& = \ket{[x]_n} \ket{[y]_n} \ket{[x + y]_n \text{ mod } 2^{n+1}}.
\end{aligned}
\end{equation}
Similarly,	 for the multiplication, subtraction, and polynomial encoding we get
\begin{equation}
\begin{aligned}
&U_{\text{sub}} \ket{[x]_n} \ket{[y]_n} \ket{0} = \ket{[x]_n} \ket{[y]_n} \ket{[x-y]_n \text{ mod } 2^{n+1}},\\
&U_{\text{mult}} \ket{[x]_n} \ket{[y]_n} \ket{0} = \ket{[x]_n} \ket{[y]_n} \ket{[xy]_n \text{ mod } 2^{n+1}},\\
&U_{\text{poly}}(p) \ket{[x]_n} \ket{[y]_n} \ket{[z]_n} ..\ket{0}, \\
& = \ket{[x]_n} \ket{[y]_n} \ket{[z]_n} .. \ket{[p(x,y,z..)]_n \text{ mod } 2^{n+1}}.
\end{aligned}
\end{equation}
To provide an explicit example suppose we want to multiply the 3-bit signed integers $-3$ and $2$. Then we first encode $[-3]_3 = 13$ and $[2]_3 = 2$ and multiply using the unsigned circuits
\begin{equation}
\begin{aligned}
U_{\text{mult}} \ket{[-3]_n} \ket{[2]_n} \ket{0} &= U_{\text{mult}} \ket{13} \ket{2} \ket{0}\\
&= \ket{13} \ket{2} \ket{26\text{ mod }2^{3+1}}\\
&= \ket{13} \ket{2} \ket{10}.\\
\end{aligned}
\end{equation}
Measuring the third register yields the measurement result $10$. To complete the example calculation, we apply the inverse eq. \ref{inv_iso}
\begin{align}
[10]_3^{-1} = -6.
\end{align}
\subsection{Signed Integer Arithmetic with Flexible Register Sizes}
Even though sufficient when provided with large numbers of quantum resources, the method described in section \ref{signed_integer_arithmetic} has the disadvantage that domain and image registers of the arithmetic operations all need to have the same size, for the modular arithmetic to work out correctly. This consumes a lot of unnecessary resources, if we for instance know that one factor is always much smaller than the other, but also creates an even more practical problem: Because the result register size is limited, we have no way of avoiding overflow. With an $n$-bit multiplication, the largest possible outcome is $(2^n-1)^2 = 2^{2n} - 2^{n+1} + 1$ implying the result register size needs to be at least $2n$-bit to prevent overflow.\\ \\
Both points above justify spending some effort on lifting the restriction of fixed register sizes. Our approach here is to derive a conversion rule for the conversion between different signed integer isomorphisms (e.g. $[x]_3 \rightarrow [x]_4$) and formulate this as an SB-polynomial. In order to acquire the conversion rule, let $x$ be an $n$-bit signed integer and $n<m \in \mathbb{N}$. If $x\geq 0$, we simply have $[x]_n = x = [x]_m$. If $x < 0$,
\begin{equation}
\begin{aligned}
[x]_m &= 2^{m+1} - |x|\\
&= 2^{m+1} - 2^{n+1} + 2^{n+1} - |x|\\
&= 2^{m+1} - 2^{n+1} + [x]_n.
\end{aligned}
\end{equation}
Summarizing both cases, we obtain
\begin{align}
\label{iso_conversion}
[x]_{m} = \begin{cases}
[x]_n &\text{if } [x]_n < 2^n\\
2^{m+1}-2^{n+1} + [x]_n &\text{else}
\end{cases}.
\end{align}
This conversion can be expressed as an SB-polynomial:
\begin{align}
\label{sb_poly_image_extension}
\Omega_{\text{IE}}^{n,m}(x) = (2^{m+1}-2^{n+1}) x_{n} + \sum_{i = 0}^{n} 2^i x_i 
\end{align}
where $\text{IE}$ stands for \textit{image extension}. 
In order to derive a relationship we will need soon, we insert $[x]_n$ in eq. \ref{sb_poly_image_extension}:
\begin{equation}
\begin{aligned}
&\Omega_{\text{IE}}^{n,m}([x]_n) \\
&= (2^{m+1}-2^{n+1}) ([x]_n)_{n} + \sum_{i = 0}^{n} 2^i ([x]_n)_i\\
&= \begin{cases}
\sum_{i = 0}^{n} 2^i ([x]_n)_i & \text{ if } ([x]_n)_{n} = 0\\
2^{m+1} - 2^{n+1} +  \sum_{i = 0}^{n} 2^i ([x]_n)_i& \text{ if } ([x]_n)_{n} = 1
\end{cases}\\
&= \begin{cases}
[x]_n & \text{ if } [x]_n < 2^n\\
2^{m+1} - 2^{n+1} + [x]_n & \text{ else }
\end{cases}\\
\label{image_expanded_polynomial}
&= [x]_{m}.
\end{aligned}
\end{equation}
We now use the SB-polynomial from eq. \ref{sb_poly_image_extension} instead of eq. \ref{sb_poly_binary_exp} to encode signed arithmetic on registers with size $n_1, n_2, m$
\begin{align}
U_{\text{add}}^{n_1, n_2, m} = U_{\text{sbp}}(\Omega_{\text{IE}}^{n_1,m}(x) + \Omega_{\text{IE}}^{n_2, m}(y)),\\
U_{\text{sub}}^{n_1, n_2, m} = U_{\text{sbp}}(\Omega_{\text{IE}}^{n_1,m}(x) - \Omega_{\text{IE}}^{n_2, m}(y)),\\
U_{\text{mult}}^{n_1, n_2, m} = U_{\text{sbp}}(\Omega_{\text{IE}}^{n_1,m}(x)\Omega_{\text{IE}}^{n_2, m}(y)).
\end{align}
Applying the multiplication to the state $\ket{[x]_{n_1}} \ket{[y]_{n_2}} \ket{0}$ then yields
\begin{equation}
\label{flexible_register_size_arthm_evaluation}
\begin{aligned}
&U_{\text{mult}}^{n_1, n_2, m} \ket{[x]_{n_1}} \ket{[y]_{n_2}} \ket{0}\\
&\underset{\text{eq.} \ref{sb_encoder_definition}}{=} \ket{[x]_{n_1}} \ket{[y]_{n_2}} \ket{\Omega_{\text{IE}}^{n_1,m}([x]_{n_1})\Omega_{\text{IE}}^{n_2, m}([x]_{n_1}) \text{ mod } 2^{m+1}}\\
&\underset{\text{eq.} \ref{image_expanded_polynomial}}{=} \ket{[x]_{n_1}} \ket{[y]_{n_2}} \ket{[x]_m [y]_m \text{ mod } 2^{m+1}}\\
&\underset{\text{eq.} \ref{signed_product_homomorphism}}{=} \ket{[x]_{n_1}} \ket{[y]_{n_2}} \ket{[xy]_m}
\end{aligned}
\end{equation}
In the first equality, we used the functionality of the SB-polynomial encoder eq. \ref{sb_encoder_definition}. The second equality is eq. \ref{image_expanded_polynomial} and the last equality is eq. \ref{signed_product_homomorphism} and using the fact that we assume suited register sizes to prevent overflow for the given multiplication.\\
Eq. \ref{flexible_register_size_arthm_evaluation} shows that the concept of the image extended SB-polynomial $\Omega_\text{IE}^{n,m}$ yields the desired results for flexible register size arithmetic.

\subsection{Floating-Point Arithmetic}
The last step for full floating-point arithmetic is encoding for the arithmetic of non-integers. For this, we will use the following representation: Let $x \in \mathbb{Q}$ be a rational such that its binary representation is finite (ie. not something like $0.\overline{01}$). Then there exist integers $k, l \in \mathbb{Z}$ such that
\begin{align}
x = \pm \sum_{i = k}^{l} 2^i x_i,
\end{align}
where $\forall k \leq i \leq l : x_i \in \mathbb{F}_2 $. We write this as
\begin{align}
x = \underline{x} 2^{k},
\end{align}
where $\underline{x}$ is a signed $n = k + l$ bit integer. We call $k$ the exponent and $\underline{x}$ the mantissa of $x$. Even though it would be straightforward to encode both mantissa and exponent as quantum variables, performing arithmetic operations on such a \textit{bi-quantum encoding} does not turn out as simple. To perform the addition of two bi-quantum encoded floats, the mantissa would have to be bit shifted depending on the state of the exponent register. Such a circuit could in principle be constructed using Fredkin gates\footnote{A Fredkin gate is a controlled SWAP gate} in combination with an incrementor gate\footnote{An incrementor gate is a gate that when applied to an $n$-qubit register, performs the operation $\ket{x} \rightarrow \ket{x + 1\text{ mod }2^n}$} however it would be very unwieldy. The data-format we are about to construct does not need such quantum-conditioned bit shifting because only the mantissa is quantum - the exponent is a classically known number. We call this property \textit{mono-quantum encoding}.\\
We extend the $[.]$ notation:
\begin{align}
\label{non_int_iso}
[x]_n^k := [\underline{x}]_n = [2^{-k} x]_n.
\end{align}
We will now prove an identity that is required later (in eq. \ref{example_float_eval}). Let $0\leq j \in \mathbb{N}_0$ we then have
\begin{equation}
\label{pulling_eq}
\begin{aligned}
2^j [x]_n \text{ mod } 2^{n+1} &= \begin{cases}
2^j x \text{ mod } 2^{n+1}&\text{if } x \geq 0\\
(2^{j+n+1} - 2^j x )\text{mod } 2^{n+1} &\text{else}
\end{cases}\\
&=\begin{cases}
2^j x \text{ mod } 2^{n+1} &\text{if } x \geq 0\\
(2^{n+1} - 2^j x )\text{ mod } 2^{n+1} &\text{else}
\end{cases}\\
&= [2^j x]_n \text{mod } 2^{n+1}.
\end{aligned}
\end{equation}
Simply plugging the SB-polynomials of the mantissa into the SB-polynomial encoder (as we did in the previous sections) however, does not suffice because we only defined the SB-polynomial encoder for integer coefficients. In order to still be able to encode non-integer values, our solution is to bit shift the polynomial (ie. multiply by a power of two) such that all its coefficients are integers. This bit shift will be reversed when decoding a measurement outcome. For this, we introduce a new notation for the SB-polynomial encoder. Consider an arithmetic operation writing into a target register with exponent $k_0$. The bit shifted SB-polynomial encoder is now defined as
\begin{align}
\label{bit_shifted_sbp_encoder}
U_\text{sbp}^{k_0}(\Omega) := U_\text{sbp}(2^{-k_0}\Omega).
\end{align}
This allows us to conveniently write down the encoding of the mantissa into the target register. Similarly, we extend the notation of the encoding SB-polynomial to account for the fact that they have to turn the boolean array of the mantissa $\underline{x}$ into the actual value $x$. For this, consider a domain register $x$ with size $n_1$ and exponent $k_1$:
\begin{align}
\Omega_{\text{IE}}^{n_1,m,k_1}(\underline{x}) := 2^{k_1} \Omega_{\text{IE}}^{n_1,m}(\underline{x}).
\end{align}
Using these definitions, it is once again possible to write down arithmetic circuits for floating-point operations. For this we assume that the relevant registers have sizes $n_0, n_1, n_2, ...$ and exponents $k_0, k_1, k_2,..$. The resulting arithmetic circuits can be summarized as follows:
\begin{align}
\label{floating_point_arithmetic}
U_{\text{add}} = U_{\text{sbp}}^{k_0}(\Omega_{\text{IE}}^{n_1,n_0,k_1}(\underline{x}) + \Omega_{\text{IE}}^{n_2,n_0,k_2}(\underline{y})),\\
U_{\text{sub}} = U_{\text{sbp}}^{k_0}(\Omega_{\text{IE}}^{n_1,n_0,k_1}(\underline{x}) - \Omega_{\text{IE}}^{n_2,n_0,k_2}(\underline{y})),\\
U_{\text{mult}} = U_{\text{sbp}}^{k_0}(\Omega_{\text{IE}}^{n_1,n_0,k_1}(\underline{x})\Omega_{\text{IE}}^{n_2,n_0,k_2}(\underline{y})).
\end{align}
Note that the target registers have certain requirements regarding their exponents, in order to ensure that only integer SB-polynomials are handed to the non-bit shifted SB-polynomial encoder. For instance, if we want to multiply two numbers with exponents $k_1, k_2$, the smallest possible outcome is $2^{k_1 + k_2}$. Hence, the target register needs to have an exponent of $k_0 \leq k_1 + k_2$. Anything above this threshold can not support the result of the $2^{k_1} \times 2^{k_2}$ multiplication.\\
To summarize, the following rules are valid for shape determination:
\begin{itemize}
\item Addition: $k_0 \leq \text{min}(k_1, k_2)$
\item Multiplication: $k_0 \leq k_1 + k_2$
\end{itemize}
These techniques also generalize to the arbitrary polynomial circuit eq. \ref{arbtitrary_polynomial_encoder}. Note that we can lift the restriction of integer coefficients as long as the bit shifted SB-polynomial
\begin{align}
2^{-k_0} p(\Omega_{\text{us}}^{n_1, n_0, k_1}(x), \Omega_{\text{us}}^{n_2, n_0, k_2}(y),\Omega_{\text{us}}^{n_3, n_0, k_3}(z),..))
\end{align}
contains only integer coefficients.\\
The following is an example for a floating-point arithmetic circuit evaluation for the multiplication of two floats with different data shapes:
\begin{equation}
\label{example_float_eval}
\begin{aligned}
&U_{\text{mult}} \ket{[x]_{n_1}^{k_1}} \ket{[y]_{n_2}^{k_2}} \ket{0} = U_{\text{mult}} \ket{[2^{-k_1}x]_{n_1}} \ket{[2^{-k_2}y]_{n_2}} \ket{0}\\
&= \ket{[2^{-k_1}x]_{n_1}} \ket{[2^{-k_2}y]_{n_2}} \\
&\ket{2^{-k_0}\Omega_{\text{IE}}^{n_1,n_0,k_1}([2^{-k_1}x]_{n_1})\Omega_{\text{IE}}^{n_2,n_0,k_2}([2^{-k_2}y]_{n_2})\text{ mod } 2^{n_0+1}}\\
&=\ket{[x]_{n_1}^{k_1}} \ket{[y]_{n_2}^{k_2}} \ket{2^{-k_0+k_1+k_2}[2^{-k_1}x]_{n_0}[2^{-k_2}y]_{n_0}\text{ mod } 2^{n_0+1}}\\
&=\ket{[x]_{n_1}^{k_1}} \ket{[y]_{n_2}^{k_2}} \ket{2^{-k_0+k_1+k_2}[2^{-k_1-k_2}xy]_{n_0}\text{ mod } 2^{n_0+1}}\\
&=\ket{[x]_{n_1}^{k_1}} \ket{[y]_{n_2}^{k_2}} \ket{[2^{-k_0}xy]_{n_0}\text{ mod } 2^{n_0+1}}\\
&=\ket{[x]_{n_1}^{k_1}} \ket{[y]_{n_2}^{k_2}} \ket{[xy]_{n_0}^{k_0}\text{ mod } 2^{n_0+1}}.
\end{aligned}
\end{equation}
As usual, we assume that the register sizes are chosen that neither an overflow nor an underflow is happening. Let's recapitulate what identities were used here
\begin{enumerate}
\item Application of definition of $[.]$ for non-integers eq. \ref{non_int_iso}
\item The bit shifted SB-polynomial encoder eq. \ref{bit_shifted_sbp_encoder}
\item Evaluation of the image extended SB-polynomials eq. \ref{image_expanded_polynomial}
\item Multiplicative morphism property of $[.]$ eq. \ref{signed_product_homomorphism}
\item Applying eq. \ref{pulling_eq} with\\$k_0 \leq k_1 + k_2 \Leftrightarrow -k_0 + k_1 + k_2 \geq 0$
\item Again eq. \ref{non_int_iso} (this time backward)
\end{enumerate}

\section{In-place operations}
\label{sec:in_place_ops}

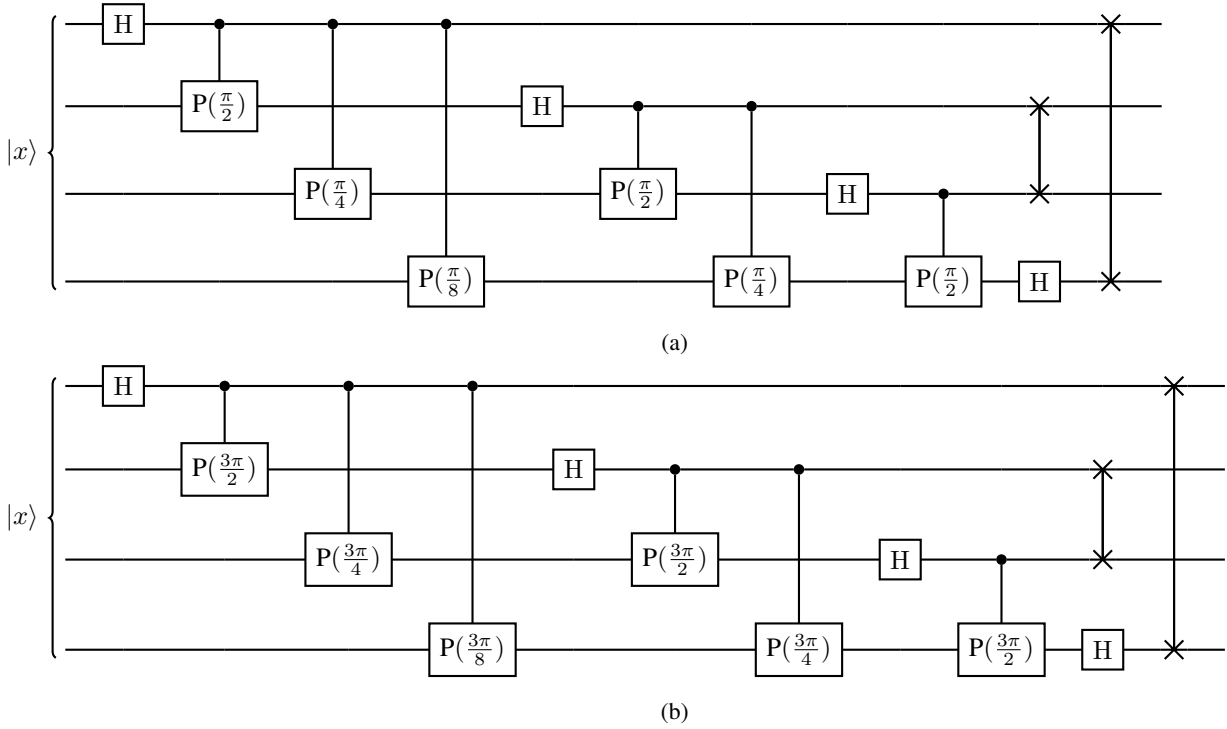
\begin{figure*}
\begin{center}
\scalebox{1}{
\begin{subfigure}{\textwidth}
\begin{quantikz}
\lstick[wires = 4]{$\ket{x}$}
\qw 	& \gate{\mathrm{H}}	&\ctrl{1}								&\ctrl{2}						&\ctrl{3} 						&\qw 		&\qw 									&\qw							&\qw		&\qw							&\qw		&\swap{3}	&\qw\\
\qw 	& \qw 		&\gate{\text{P}(\frac{\pi}{2})}			&\qw							&\qw		 					&\gate{\mathrm{H}}	&\ctrl{1}								&\ctrl{2}						&\qw		&\qw							&\swap{1}	&\qw		&\qw\\
\qw 	& \qw 		&\qw									&\gate{\text{P}(\frac{\pi}{4})}	&\qw							&\qw 		&\gate{\text{P}(\frac{\pi}{2})}			&\qw							&\gate{\mathrm{H}}	&\ctrl{1}						&\swap{-1}	&\qw		&\qw\\
\qw 	& \qw 		&\qw									&\qw 							&\gate{\text{P}(\frac{\pi}{8})} &\qw 		&\qw									&\gate{\text{P}(\frac{\pi}{4})}	&\qw		&\gate{\text{P}(\frac{\pi}{2})}	&\gate{\mathrm{H}}	&\swap{-3}	&\qw
\end{quantikz}
\caption{\label{regular_qft_circuit}}
\end{subfigure}}
\scalebox{1}{
\begin{subfigure}{\textwidth}
\begin{quantikz}
\lstick[wires = 4]{$\ket{x}$}
\qw 	& \gate{\mathrm{H}}	&\ctrl{1}								&\ctrl{2}						&\ctrl{3} 						&\qw 		&\qw 									&\qw							&\qw		&\qw							&\qw		&\swap{3}	&\qw\\
\qw 	& \qw 		&\gate{\text{P}(\frac{3\pi}{2})}		&\qw							&\qw		 					&\gate{\mathrm{H}}	&\ctrl{1}								&\ctrl{2}						&\qw		&\qw							&\swap{1}	&\qw		&\qw\\
\qw 	& \qw 		&\qw									&\gate{\text{P}(\frac{3\pi}{4})}&\qw							&\qw 		&\gate{\text{P}(\frac{3\pi}{2})}		&\qw							&\gate{\mathrm{H}}	&\ctrl{1}						&\swap{-1}	&\qw		&\qw\\
\qw 	& \qw 		&\qw									&\qw 							&\gate{\text{P}(\frac{3\pi}{8})}&\qw 		&\qw									&\gate{\text{P}(\frac{3\pi}{4})}&\qw		&\gate{\text{P}(\frac{3\pi}{2})}&\gate{\mathrm{H}}	&\swap{-3}	&\qw
\end{quantikz}
\caption{\label{in_place_mul}}
\end{subfigure}}
\caption{\label{in_place_mult_cirquits} \textbf{\ref{regular_qft_circuit}} The circuit for the reqular QFT. \textbf{\ref{in_place_mul}} The circuit for QFT in-place multiplication with $a = 3$.}
\end{center}
\end{figure*}

A significant disadvantage of the methods described in section \ref{sec:arithmetic} is that we always need a new register to store our results. This is the case for many quantum operations, in order to ensure the reversibility of the circuits. However, certain arithmetic operations are reversible by themselves and in principle it should be possible to implement them in-place i.e., operating only on the domain register and not requiring a target register. In order to reduce the complexity of the equations in this section, the presentation focuses on the unsigned integers and omits the variety of indices relating to the register sizes and exponents. However, all of the considerations in this section can also be performed with full floating-point arithmetic.\\
\subsection{In-place Addition}
Since modular addition is a reversible operation, implementing it in-place poses no major challenge. Indeed, it can be easily recognized that the initial H gates of the SB-polynomial encoder can be interpreted as a streamlined\footnote{"Streamlined" in this context means a more efficient version which is only valid if the input state is $\ket{0}$} version of a QFT that only acts on the $\ket{0}$ state
\begin{equation}
\begin{aligned}
(H\ket{0})^{n \otimes}  &= \frac{1}{\sqrt{2^n}} \sum_{k = 0}^{2^n-1} \ket{k}\\
&= \frac{1}{\sqrt{2^n}} \sum_{k = 0}^{2^n-1} \text{exp}\left(\frac{2\pi i 0 k}{2^{n}} \right)\ket{k} \\
&=  \text{QFT} \ket{0}.
\end{aligned}
\end{equation}
Simply replacing $\mathrm{H}^{n\otimes}$ in the SB-polynomial encoder (eq. \ref{sbp_encoder_circuit}) with the general form of a QFT yields the desired result
\begin{align}
U_{\text{sbp}}^{\text{in-place}}(\Omega(x)) = \text{QFT}^{\dagger} U_{\mathrm{G}}(\Omega(x)) \text{QFT}.
\end{align}
Here, the Fourier-transformations are applied onto the register, where the in-place operation is supposed to happen. Basically, everything discussed before - regarding sign and exponent treatment - now directly translates to this setting:
\begin{align}
U_{\text{add}}^{\text{in-place}} &= U_{\text{sbp}}^{\text{in-place}} (\Omega(x)),\\
U_{\text{sub}}^{\text{in-place}} &= U_{\text{sbp}}^{\text{in-place}} (-\Omega(x)),\\
U_{\text{poly}}^{\text{in-place}}(p) &= U_{\text{sbp}}^{\text{in-place}} (p(\Omega(x),\Omega(y),..)).\end{align}
We present an example calculation for the case if in-place additions on two registers in a computational basis state $\ket{x}\ket{y}$ with sizes $n, m$ are performed:
\begin{equation}
\begin{aligned}
&U_{\text{add}}^{\text{in-place}}\ket{x}\ket{y}\\
= &U_{\text{sbp}}^{\text{in-place}}(\Omega(x)) \ket{x}\ket{y}\\
= &\text{QFT}^{\dagger} U_{\mathrm{G}}(\Omega(x)) \text{QFT} \ket{x}\ket{y}\\
= &\frac{1}{\sqrt{2^{m}}} \text{QFT}^{\dagger} U_{\mathrm{G}}(\Omega(x)) \ket{x} \sum_{k = 0}^{2^{m}-1} \text{exp}\left(\frac{2\pi i y k}{2^{m}} \right)\ket{k}\\
= &\frac{1}{\sqrt{2^{m}}} \text{QFT}^{\dagger} \ket{x} \sum_{k = 0}^{2^{m}-1} \text{exp}\left(\frac{2\pi i (x + y) k}{2^{m}} \right)\ket{k}\\
= &\ket{x}\ket{(x + y)\text{ mod }2^{m}}
\end{aligned}
\end{equation}
\subsection{In-place Multiplication}
\label{in-place_mult}
It is well known that every quantum operation has to be invertible - this is because all the elementary gates that are available to us are invertible, implying that every sequence of them is also invertible. For an arbitrary modular in-place multiplication this is however unfortunately not the case. For instance, we have
\begin{align}
(2 \times 7) \text{ mod }8 = 6 \text{ mod }8\\
(2 \times 3) \text{ mod }8 = 6 \text{ mod }8
\end{align}
If we are only given the result 6 and the fact that our operation was a 2-multiplication, we can't infer which was the initial value. In fact, this is the reason why our modular arithmetic in $\mathbb{Z}/2^n \mathbb{Z}$ is a ring and not a field because not all of its elements possess an inverse. 
The following theorem addresses the invertibility of elements in $\mathbb{Z}/2^n \mathbb{Z}$.

\begin{theorem}
Let $n \in \mathbb{N}$ and $a \in \mathbb{Z}/2^n\mathbb{Z}$. We then have
\begin{align}
a \text{ invertible in } \mathbb{Z}/2^n \mathbb{Z} \Leftrightarrow a \text{ mod }{2} = 1
\end{align}
\end{theorem}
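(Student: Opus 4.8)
The plan is to establish the stated equivalence by proving each implication separately, reading the condition $a \text{ mod } 2 = 1$ simply as ``$a$ is odd''. The unifying observation is that invertibility of $a$ in $\mathbb{Z}/N\mathbb{Z}$ is governed by $\gcd(a,N)$, and since $N = 2^n$ has $2$ as its only prime factor, $\gcd(a,2^n) = 1$ holds precisely when $a$ is odd. Throughout I take $n \geq 1$, so that $2^n$ is even and the ring is nontrivial (the degenerate case $n = 0$ gives the zero ring and is excluded).

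For the direction ``$a$ invertible $\Rightarrow$ $a$ odd'', I would argue by contraposition. Suppose $a$ is even, i.e.\ $a \text{ mod } 2 = 0$, so $a = 2b$ for some integer $b$. Then for every candidate $c \in \mathbb{Z}/2^n\mathbb{Z}$ the product $ac = 2bc$ is even, and because $2^n$ is itself even, the reduction $ac \text{ mod } 2^n$ differs from $ac$ by an even multiple of $2^{n-1}$ and hence stays even. An even residue can never equal $1$, so no $c$ satisfies $ac \equiv 1$; thus $a$ is not invertible.

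For the converse ``$a$ odd $\Rightarrow$ $a$ invertible'', the key step---and the only real obstacle---is actually exhibiting an inverse rather than merely arguing one exists. Since $a$ is odd we have $\gcd(a,2^n) = 1$, and I would invoke B\'ezout's identity to obtain integers $u,v$ with $ua + v\,2^n = 1$. Reducing modulo $2^n$ yields $ua \equiv 1 \pmod{2^n}$, so $u \text{ mod } 2^n$ is the sought inverse; concretely such a $u$ is produced by the extended Euclidean algorithm. An alternative that sidesteps citing B\'ezout is a counting argument: when $\gcd(a,2^n) = 1$, Euclid's lemma shows that the multiplication map $c \mapsto ac \text{ mod } 2^n$ is injective on the finite set $\mathbb{Z}/2^n\mathbb{Z}$, whence it is surjective, so some $c$ maps to $1$. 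Either route closes the equivalence, and I expect the B\'ezout construction to be the cleanest to present since it also makes the inverse effective.
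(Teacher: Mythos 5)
Your proposal is correct and takes essentially the same route as the paper: both reduce invertibility in $\mathbb{Z}/2^n\mathbb{Z}$ to the condition $\gcd(a,2^n)=1$ via B\'ezout, and then use the fact that $2$ is the only prime factor of $2^n$ to identify this condition with $a$ being odd. The only difference is one of care: the paper cites B\'ezout's lemma as if it were the full biconditional, whereas you correctly isolate the ``invertible $\Rightarrow$ odd'' direction and close it with an elementary parity argument, which makes your write-up slightly more rigorous than the paper's.
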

\begin{proof}
According to Bézout's lemma \cite{bezout1779}, the condition of $a$ being invertible is that
\begin{align}
\text{gcd}(a, 2^n) = 1,
\end{align}
where gcd denotes the greatest common divisor function. In our case this is easily evaluated because the only prime factor of $2^n$ is $2$, implying $a$ is invertible if $2$ is not a prime factor of $a$. This is equivalent to $2$ being an odd number.
\end{proof}
$\text{ }$ \newline
This implies that only circuits performing modular multiplications with odd numbers are reversible and can get executed in-place. Hence, the question emerges of how such circuits can be generated.\\ \\
For now, let's focus on multiplying with numbers that are classically known i.e. not the content of a quantum register. A possible way to do this within the framework of the in-place SB-polynomial encoder is to modify the initial QFT. Let $a,x \in \mathbb{Z}/2^n \mathbb{Z}$ where $a$ is invertible. The first step to derive this modification will now be to construct the Fourier-transformed $\ket{ax}$ state. By applying the inverse transform $\text{QFT}^{\dagger}$, $\ket{ax}$ can then afterward be recovered. According to Nielsen \& Chuang eq. (5.5-5.10) \cite{nielsen00} we have
\begin{equation}
\begin{aligned}
\text{QFT}\ket{ax} &= \frac{1}{\sqrt{2^n}} \bigotimes_{l = 0}^{n-1} \left( \ket{0} + \text{exp} \left(\frac{2\pi i ax}{2^{l+1}}\right) \ket{1}\right)\\
&= \frac{1}{\sqrt{2^n}} \bigotimes_{l = 0}^{n-1} \left( \ket{0} + \prod_{k = 0}^{n-1} \text{exp} \left(x_k\frac{2\pi i a 2^k}{2^{l+1}}\right) \ket{1}\right).
\end{aligned}
\end{equation}
Note that for index constellations where $k > l$, the phase is an integer multiple of $2\pi$. Hence we can simplify
\begin{equation}
\begin{aligned}
&\text{QFT}\ket{ax} \\
=&\frac{1}{\sqrt{2^n}} \bigotimes_{l = 0}^{n-1} \left( \ket{0} + \prod_{k = 0}^{l} \text{exp} \left(x_k\frac{2\pi i a 2^k}{2^{l+1}}\right) \ket{1}\right).
\end{aligned}
\end{equation}
Similar to the regular quantum Fourier-transform, this equation shows how the phases of the corresponding qubits are successively synthesized: The $\ket{1}$ state of the $l$-th qubit receives a phase if the $k$-th qubit is also in the $\ket{1}$ state because only then $x_k = 1$ (otherwise we have $x_k = 0$). This is done by performing controlled phase gates on these qubits. To turn this into an in-place multiplication, we will therefore simply perform a QFT whilst substituting any occurring $\text{CP}(\phi)$ gates\footnote{A CP gate is a controlled phase gate} with $\text{CP}(a\phi)$ gates. For an example circuit check fig. \ref{in_place_mult_cirquits}. There is however still one extra case left to be considered. If $l = k$, the gate which would apply the phase $x_k\pi$ in the regular QFT, is the H gate on the $k$-th qubit:
\begin{align}
\text{H} \ket{x_k} = \frac{1}{\sqrt{2}}(\ket{0} + \text{exp}(i\pi x_k)\ket{1}).
\end{align}
Since there is no such gate that would perform the same transformation but with a phase of $\pi a x_k $ this might cause problems. This is however easily resolved, because we assumed that $a$ is invertible, implying $a \text{ mod } 2 = 1$. Therefore
\begin{equation}
\begin{aligned}
(x_k a) \text{ mod } 2 &= (x_k \text{ mod } 2) (a \text{ mod } 2)\\
& = x_k \text{ mod } 2.
\end{aligned}
\end{equation}
Implying $x_k \pi$ and $a x_k \pi$ correspond to the same phase shift.\\
We conclude:
\begin{equation}
\begin{aligned}
&\text{QFT }U^{\text{in-place}}_{\text{mul}} (a)\\
= &\frac{1}{\sqrt{2^n}} \text{SW}\prod_{l = 0}^{n-1} \left(\prod_{k = 0}^{l-1} \text{CP}_{x_l x_k} \left(a\frac{2\pi 2^k }{2^{l+1}}\right)\text{H}_{x_l}\right),
\end{aligned}
\end{equation}
where SW denotes the swap gates, which are executed as usual. Note that even though this notation might suggest that $U^{\text{in-place}}_{\text{mul}} (a)$ could in principle be a standalone circuit, this is not the case - the in-place multiplication can only happen \textit{during} a $\text{QFT}$. We choose this notation to capture its functionality, which is
\begin{align}
\text{QFT }U^{\text{in-place}}_{\text{mul}} (a) \ket{x} = \text{QFT} \ket{ax \text{ mod } 2^n}.
\end{align}
For an example circuit, see fig. \ref{in_place_mult_cirquits}.

\subsection{Semi-in-place multiplication}
The restriction of only being able to perform odd number in-place multiplication can be at least partially lifted. For this, we note that a $2^k$ multiplication is the same as a $k$-bit shift into the more significant direction. This can be achieved via a compiler pass for increasing the exponent of the float in question by $k$. To perform in-place multiplication with an arbitrary (classically known) integer $a$, we, therefore, follow the following protocol
\begin{itemize}
\item Factorize $a$ (on the classical computer) such that $ a = b 2^k$, where b is odd.
\item Add $k$ to the exponent of the quantum float in question.
\item Perform in-place multiplication with $b$ as discussed in section \ref{in-place_mult}.
\end{itemize}
Even though this method in principle allows arbitrary number multiplication, it comes with some drawbacks
\begin{itemize}
\item It is still not possible to perform in-place multiplication with a number encoded in another quantum register.
\item The $2^k$ multiplication does not follow the modular arithmetic, while the $b$ multiplication does, which results in a complicated overflow behavior. 
\end{itemize}
To be more specific, consider the example $a  = 6 = 3 \cdot 2^1$ which will be in-place multiplied on the state unsigned integer state $\ket{[7]^0_4}$. Since the register size is $4$, the maximum number which can get represented is $2^4 -1 = 15$, implying the mantissa multiplication $3 \cdot 7 = 21$ results in overflow. Following the above protocol we have
\begin{align}
U_{\text{mul}}^{\text{semi-in-place}} (6) \ket{[7]^{0}_4} =  \ket{[3 \cdot 7]^{1}_4}
\end{align}
If we now measure and decode, we get the result 
\begin{align}
((3 \cdot 7) \text{ mod } 2^4) \cdot 2^{1} = 10,
\end{align}
which is rather uninformative. Note that this only causes problems, if the \textbf{mantissa multiplication} results in overflow: If we instead multiplied $a  = 6$ on $\ket{[3]^0_4}$, the correct result would be to be $18$, which is still more than $15$. However the mantissa multiplication $3 \cdot 3 = 9$ yields no overflow, so we still acquire the correct result
\begin{align}
((3 \cdot 3) \text{ mod } 2^4) \cdot 2^{1} = 18.
\end{align}

\newpage
\section{Performance}
\label{sec:performance}
This section will demonstrate some techniques that lower resource requirements of the SBP-encoder compared to the naive implementation. We note that even though these techniques can be used for floating point arithmetic evaluations, we formulate the equations in terms of unsigned integer arithmetic in order to suppress the variety of indices. Subsequently, a complexity analysis will be given.
\subsection{Conditional execution of $U_\mathrm{G}$}
\label{controlled_exec_U_G}
The general idea of the first improvement is to calculate the truth value of a given monomial into an ancilla qubit and use this ancilla qubit to control the execution of the $U_\mathrm{G}(y)$ gate.\\
This can be understood best with a hands on example. Imagine we want to encode the monomial $\Omega(x) = 3x_0 x_1 x_2$.
The naive implementation fig. \ref{naive_sbp_implementation} (as described in section \ref{sec:fundamentals}) would be to perform the phase gates in $U_\mathrm{G}(3)$ controlled on the qubits $x_0, x_1, x_2$ (see fig. \ref{naive_sbp_implementation}).
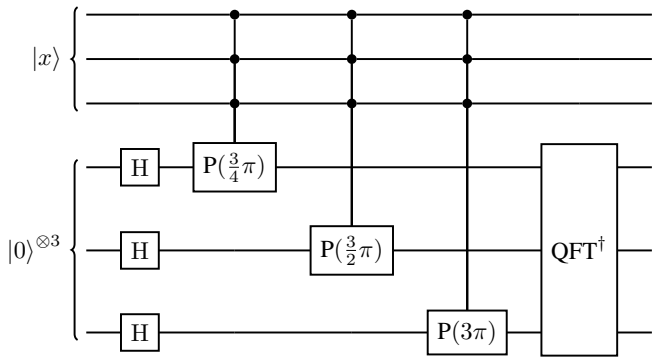
\begin{figure}
\adjustbox{max width=0.5\textwidth}{
\begin{quantikz}
\lstick[wires = 3]{$\ket{x}$} \qw &  \qw &\ctrl{3} &\ctrl{4} &\ctrl{5} &\qw & \qw \\
\qw & \qw &\ctrl{2}&\ctrl{3}&\ctrl{4} &\qw & \qw \\
\qw & \qw & \ctrl{1}&\ctrl{2} &\ctrl{3} &\qw & \qw \\
\lstick[wires = 3]{$\ket{0}^{\otimes 3}$} \qw & \gate{\mathrm{H}} & \gate{\text{P}(\frac{3}{4}\pi)} &\qw &\qw &\gate[3]{\text{QFT}^\dagger} & \qw \\
\qw & \gate{\mathrm{H}}&\qw & \gate{\text{P}(\frac{3}{2}\pi)} &\qw & & \qw\\
\qw & \gate{\mathrm{H}} &\qw &\qw & \gate{\text{P}(3\pi)} && \qw 
\end{quantikz}
}
\caption{\label{naive_sbp_implementation} Naive implementation of the SBP-encoder.}
\end{figure}\\
The improved version calculates the truth value of the product $x_0 x_1 x_2$ into an ancilla qubit, which is then used as a control for the phase gates (see fig. \ref{ancilla_supported_impl}).

\begin{figure}
\adjustbox{max width=0.5\textwidth}{
\begin{quantikz}
\lstick[wires = 3]{$\ket{x}$} 
\qw 	& \qw 		&\ctrl{6} 	&\qw 					& \qw 					&\qw 			& \ctrl{6}							& \qw								& \qw \\
\qw 	& \qw 		&\ctrl{5}	&\qw 					& \qw 					&\qw 			& \ctrl{5}							& \qw								& \qw \\
\qw 	& \qw 		& \ctrl{4}	&\qw 					& \qw 					&\qw 			& \ctrl{4}							& \qw 								& \qw\\
\lstick[wires = 3]{$\ket{0}^{\otimes 3}$}
\qw 	&\gate{\mathrm{H}} 	&\qw 		& \gate{\text{P}(\frac{3}{4}\pi)} & \qw 					& \qw			& \qw 								&\gate[3]{\text{QFT}^\dagger}		& \qw \\
\qw  	& \gate{\mathrm{H}}	&\qw 		&\qw 					& \gate{\text{P}(\frac{3}{2}\pi)} & \qw 			& \qw 								& 									& \qw \\
\qw  	& \gate{\mathrm{H}} 	&\qw 		&\qw 					& \qw 					& \gate{\text{P}(3\pi)} 	& \qw 								& 									& \qw \\
\lstick[wires = 1]{$\ket{0}^{\otimes 1}$}
\qw 	& \qw 		&\gate{\mathrm{X}} 	& \ctrl{-3} 			& \ctrl{-2} 			& \ctrl{-1}		& \gate{\mathrm{X}} 							& \qw 								& \qw
\end{quantikz}
}
\caption{\label{ancilla_supported_impl}Ancilla supported implementation of the SBP-encoder.}
\end{figure}
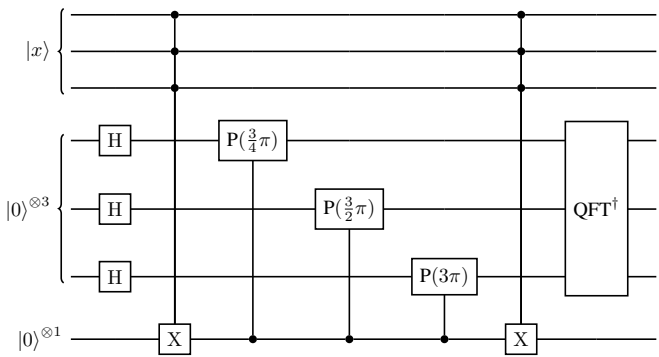
In this very simple scenario, the advantage is already visible as we used two instead of three 3-controlled operations. The improvements obviously rapidly grows, once we have a target register with more than three qubits.\\
The CNOT/RZ gate count for the (un)computation of the truth value of the ancilla qubit can furthermore be reduced by (asymptotically) $50\%$ by using phase tolerant gray synthesis, a synthesis technique we presented in \cite{Seidel2021}.
\subsection{SWAP-less Fourier 	transform}
The CNOT count can be reduced with omitting the SWAP gates at the end of the QFT implementation by reversing the qubit input order, i.e, reversing the order of the applied phase gates:
\begin{figure}

\adjustbox{max width=0.4\textwidth}{
\begin{quantikz}
\lstick[wires = 3]{$\ket{x}$} 
\qw 	& \qw 		&\ctrl{6} 	&\qw 					& \qw 					&\qw 			& \ctrl{6}							& \qw								& \qw \\
\qw 	& \qw 		&\ctrl{5}	&\qw 					& \qw 					&\qw 			& \ctrl{5}							& \qw								& \qw \\
\qw 	& \qw 		& \ctrl{4}	&\qw 					& \qw 					&\qw 			& \ctrl{4}							& \qw 								& \qw\\
\lstick[wires = 3]{$\ket{0}^{\otimes 3}$}
\qw 	&\gate{\mathrm{H}} 	&\qw 		& \gate{\text{P}(3\pi)} & \qw 					& \qw			& \qw 								&\gate[3]{\text{QFT}^\dagger \text{ (no swaps) }}		& \qw \\
\qw  	& \gate{\mathrm{H}}	&\qw 		&\qw 					& \gate{\text{P}(\frac{3}{2}\pi)} & \qw 			& \qw 								& 									& \qw \\
\qw  	& \gate{\mathrm{H}} 	&\qw 		&\qw 					& \qw 					& \gate{\text{P}(\frac{3}{4}\pi)} 	& \qw 								& 									& \qw \\
\lstick[wires = 1]{$\ket{0}^{\otimes 1}$}
\qw 	& \qw 		&\gate{\mathrm{X}} 	& \ctrl{-3} 			& \ctrl{-2} 			& \ctrl{-1}		& \gate{\mathrm{X}} 							& \qw 								& \qw
\end{quantikz}

}
\caption{\label{ancilla_supported_sbp} Ancilla supported implementation without SWAP gates at the end of the QFT.}
\end{figure}
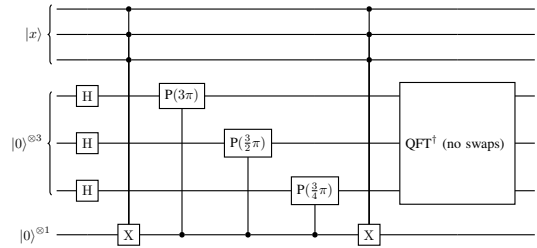
\subsection{Parallel $U_\mathrm{G}$ execution}
\label{parallel_U_G_execution}
Another technique that can be used to reduce the circuit depth is to allocate multiple ancilla qubits, calculate the truth value of various monomials into them and then execute the CP gates in parallel. An example for this is depicted in fig. \ref{depth_reduction_example}. For an unsigned $32$-bit integer multiplication, using $32$ ancilla qubits resulted in a depth reduction by a factor $15$ versus only a single available ancilla.

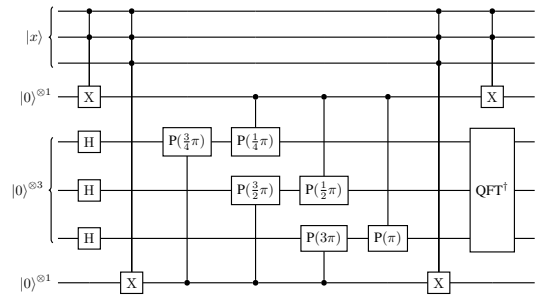
\begin{figure}
\adjustbox{max width=0.4\textwidth}{
\begin{quantikz}
\lstick[wires = 3]{$\ket{x}$} 
\qw 	& \ctrl{3} 	&\ctrl{7} 	&\qw 					& \qw 					&\qw 			&\qw 			& \ctrl{7}							& \ctrl{3}								& \qw \\
\qw 	& \ctrl{2} 	&\ctrl{6}	&\qw 					& \qw 					&\qw 			&\qw 			& \ctrl{6}							& \ctrl{2}								& \qw \\
\qw 	& \qw 		& \ctrl{5}	&\qw 					& \qw 					&\qw 			&\qw 			& \ctrl{5}							& \qw 								& \qw\\
\lstick[wires = 1]{$\ket{0}^{\otimes 1}$}
\qw 	& \gate{\mathrm{X}} 	&\qw		& \qw 					& \ctrl{1} 				& \ctrl{2} 		& \ctrl{3}							& \qw						& \gate{\mathrm{X}} 								& \qw\\
\lstick[wires = 3]{$\ket{0}^{\otimes 3}$}
\qw 	&\gate{\mathrm{H}} 	&\qw 		& \gate{\text{P}(\frac{3}{4}\pi)} & \gate{\text{P}(\frac{1}{4}\pi)}	& \qw			&\qw 			& \qw 								&\gate[3]{\text{QFT}^\dagger}		& \qw \\
\qw  	& \gate{\mathrm{H}}	&\qw 		&\qw 					& \gate{\text{P}(\frac{3}{2}\pi)} & \gate{\text{P}(\frac{1}{2}\pi)} 		&\qw 			& \qw 								& 									& \qw \\
\qw  	& \gate{\mathrm{H}} 	&\qw 		&\qw 					& \qw 					& \gate{\text{P}(3\pi)} 	&\gate{\text{P}(\pi)}		& \qw 								& 									& \qw \\
\lstick[wires = 1]{$\ket{0}^{\otimes 1}$}
\qw 	& \qw 		&\gate{\mathrm{X}} 	& \ctrl{-3} 			& \ctrl{-2} 			& \ctrl{-1}		&\qw 			& \gate{\mathrm{X}} 							& \qw 								& \qw
\end{quantikz}
}
\caption{\label{depth_reduction_example} Multi-ancilla supported parallel execution of two controlled $U_\mathrm{G}$ gates, encoding the SB-polynomial $\Omega(x) = 3x_0 x_1 x_2 + x_0 x_1$.}
\end{figure}
\subsection{Monomial encoding order}
Another important factor for circuit depth reduction is the order, in which monomials are encoded. An example for why this makes a difference is depicted in fig. \ref{monomial_order_figure}. To understand how much of an impact this can make, we note that the worst order we could find, has roughly about $17$ times the depth of the best order we could find for a 32-bit unsigned integer multiplication. Our general approach for determining a suited order, is to evaluate a cost function $\mathcal{C}(\Omega_\text{m})$ for each monomial $\Omega_\text{m}$ in each step and choose the monomial with the smallest cost for that particular iteration. We evaluated several cost functions and found that
\begin{align}
\mathcal{C}(\Omega_\text{m}) = \text{max}(\{ d_{x_i} | x_i \in \text{Vars}(\Omega_\text{m}) \})
\end{align}
produces the best results. Here, $\text{Vars}(\Omega_\text{m})$ denotes the set of variables of $\Omega_\text{m}$ (i.e., $\text{Vars}(4x_0 x_1 x_3) = \{x_0, x_1, x_3\}$) and $d_{x_i}$ the depth of the qubit\footnote{To define the depth of a qubit $q$, we divide the containing circuit $C$ (as with usual depth definition) in to a sequence of time steps. During each time step, each qubit can execute at most one gate. The depth of $q$ is then defined as the amount of time steps until the final gate on $q$ is executed. Therefore we have the following formula for the depth of the circuit: $D_C = \text{max}(\{d_q | q \text{ is qubit in } C\})$.} associated to $x_i$.
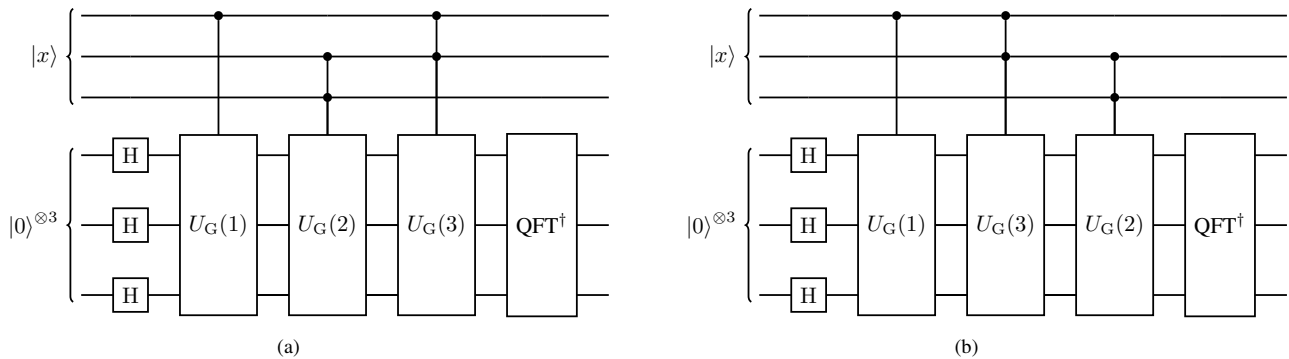
\begin{figure*}
\scalebox{0.85}{
\begin{subfigure}{0.5\textwidth}
\begin{quantikz}
\lstick[wires = 3]{$\ket{x}$} 
\qw 	& \qw 		&\ctrl{3}			&\qw 				&\ctrl{3} 			&\qw 							& \qw \\
\qw 	& \qw 		&\qw				&\ctrl{4}			&\ctrl{3}	 		&\qw 							& \qw \\
\qw 	& \qw 		&\qw				&\ctrl{3} 			&\qw 				&\qw 							& \qw \\
\lstick[wires = 3]{$\ket{0}^{\otimes 3}$} 
\qw 	& \gate{\mathrm{H}} 	& \gate[3]{U_\mathrm{G}(1)} 	& \gate[3]{U_\mathrm{G}(2)}	&\gate[3]{U_\mathrm{G}(3)}	&\gate[3]{\text{QFT}^\dagger} 	& \qw \\
\qw 	& \gate{\mathrm{H}}	&	 				& \qw				&\qw	 			& 								& \qw\\
\qw 	& \gate{\mathrm{H}} 	& 					& \qw 				& \qw		 		&								& \qw 
\end{quantikz}
\caption{\label{right_order}}
\end{subfigure}}
\hspace{1cm}
\scalebox{0.85}{
\begin{subfigure}{0.5\textwidth}
\begin{quantikz}
\lstick[wires = 3]{$\ket{x}$} 
\qw 	& \qw 		&\ctrl{3}			&\ctrl{4} 				&\qw 				&\qw 							& \qw \\
\qw 	& \qw 		&\qw				&\ctrl{3}				&\ctrl{4} 			&\qw 							& \qw \\
\qw 	& \qw 		&\qw				&\qw		 			&\ctrl{3} 			&\qw 							& \qw \\
\lstick[wires = 3]{$\ket{0}^{\otimes 3}$} 
\qw 	& \gate{\mathrm{H}} 	& \gate[3]{U_\mathrm{G}(1)} 	& \gate[3]{U_\mathrm{G}(3)}		&\gate[3]{U_\mathrm{G}(2)}	&\gate[3]{\text{QFT}^\dagger} 	& \qw \\
\qw 	& \gate{\mathrm{H}}	&	 				& \qw					&\qw 				& 								& \qw\\
\qw 	& \gate{\mathrm{H}} 	& 					& \qw 					& \qw		 		&								& \qw 
\end{quantikz}
\caption{\label{wrong_order}}
\end{subfigure}}
\caption{\label{monomial_order_figure} Two different orders of monomial encoding for the SB-polynomial $\Omega(x) = x_0  + 2 x_1 x_2 + 3 x_0 x_1$. We assume here that the techniques from section \ref{parallel_U_G_execution} are used to execute the controlled $U_\mathrm{G}$ gates in parallel. Note that \textbf{\ref{right_order}} allows faster execution because the ancillas for $U_\mathrm{G}(1)$ and $U_\mathrm{G}(2)$ can be computed in parallel. This is not the case for \textbf{\ref{wrong_order}}. Here, the computation of the corresponding ancillas awaits the conclusion of the previous controlled $U_\mathrm{G}$ gate in every step.}
\end{figure*}

\subsection{Entanglement via Global M\o lmer-S\o rensen gates}
One of the interesting perks of ion-trap architectures is that they are able to perform multiple two-qubit entangling operations within one laser pulse \cite{Moelmer1999}. These operations are frequently called \textit{Global M\o lmer-S\o rensen} or GMS gates. The unitary matrix describing a GMS gate on $n$ qubits is determined by a symmetric $n \times n$ matrix $\chi$ indicating XX interactions:
\begin{equation}
\begin{aligned}
\text{GMS}(\chi) = \text{exp}\left(-\frac{i}{2} \sum_{i = 0}^{n-1} \sum_{j = i+1}^{n-1} \chi_{ij} \text{X}_i \otimes \text{X}_j \right)
\end{aligned}
\end{equation}
An important distinction that has to be made is whether the entries of $\chi$ are uniform\footnote{Uniform denotes that $\forall i,j : \chi_{ij} = c$ for some constant $c$} or not. The uniform case has been demonstrated in \cite{Moelmer1999} - the non-uniform case has not been realized on a physical backend to the best of our knowledge, however we still include it's treatment in case of future progress on this field.\\
Using the techniques described in \cite{Maslov2018}, we can perform every single entangling operation using GMS gates. We will see that the requirement of uniform GMS gates only increases the amount of entangling operations by a constant factor (compared to allowing arbitrary $\chi$). This is possible because many parameters can be inserted into the circuit via parametrized single qubit gates.\\
For this, we note that the SBP-encoder consists of three types of entangling operations:
\begin{enumerate}
\item (Un)computing the ancilla qubit in fig. \ref{ancilla_supported_sbp}
\item Performing the $U_\mathrm{G}$ gate controlled on the ancilla qubit
\item Entangling operations contained in the QFT
\end{enumerate}
Regarding the first type, we refer to \cite{Maslov2018} which provides circuits for 2 and 3 controlled X gates. This obviously restricts the maximum degree of the SB-polynomials to 3, however we note that many important applications (such as multiplication, addition or subtraction) only require degree 2 or lower. Therefore, we leave the efficient synthesis of $\geq 4$ controlled X gates using only GMS entangling operations as an open research question.\\
For the second and third type we note that both are of the shape of multiple successive CP gates, where one knob is always on the same qubit but the others evenly distributed on the other qubits. We will denote circuits of this kind as \textit{ascending CP sequences} - for an example check fig. \ref{ascending_CP_ex}.
\begin{figure}
\begin{center}

\begin{quantikz}
\qw 	& \ctrl{1}						&\ctrl{2} 					&\ctrl{3}		 			&\ctrl{4}						& \qw\\
\qw 	& \gate{\text{P}(\phi_1)}	 	&\qw 						&\qw						&\qw							& \qw\\
\qw		& \qw 							& \gate{\text{P}(\phi_2)} 	&\qw						&\qw							& \qw\\
\qw 	& \qw 							& \qw						&\gate{\text{P}(\phi_3)}	&\qw 							& \qw\\
\qw 	& \qw  							& \qw						&\qw 						&\gate{\text{P}(\phi_4)}		& \qw\\
\end{quantikz}
\end{center}
\caption{\label{ascending_CP_ex} Apart from multi-controlled X gates, every entangling operation of the SBP-encoder is of this shape}
\end{figure}
We will now construct a technique that can execute ascending CP sequences, but also any other CP sequence with only a single non-uniform GMS gate (+some single qubit gates).\\
For a single CP gate acting on the two-qubit computational basis state $\ket{xy}$, where $x,y \in \{0,1\}$, the applied phase reads
\begin{equation}
\begin{aligned}
&\text{CP}_{ab}(\phi) \ket{xy} = \text{exp}(i\phi xy) \ket{xy}\\
&= \text{exp} \left( i\frac{\phi}{2}(x + y - (x \oplus y))\right)  \ket{xy}\\
&= \text{exp} \left(\frac{i\phi x}{2}\right) \text{exp}\left(\frac{i \phi y}{2}\right) \text{exp}\left( i\frac{-\phi (1 - (-1)^{x+y})}{4}\right) \ket{xy}\\
&= \text{P}_a\left(\frac{\phi}{2}\right) \text{P}_b\left(\frac{\phi}{2}\right) \text{exp}\left( i\frac{\phi}{4}\text{Z}_a \otimes \text{Z}_b\right) \text{exp}\left( i\frac{\phi}{4}\right) \ket{xy},
\end{aligned}
\label{CP_decomposition}
\end{equation}
where we used 
\begin{align}
2xy = x + y - (x \oplus y)\text{ } , \forall (x,y) \in \{0,1\}^2
\end{align} 
with the $\text{mod } 2$ addition denoted by $\oplus$. Since we proved in eq. \ref{CP_decomposition} for a complete base of $(\mathbb{C}^2)^{\otimes 2} $, we have a valid operator identity:
\begin{equation}
\label{CP_dissolution_eq}
\begin{aligned}
& \text{CP}(\phi)_{ab}
\\= \text{ } &\text{P}_a\left(\frac{\phi}{2}\right) \text{P}_b\left(\frac{\phi}{2}\right) \text{exp}\left( i\frac{\phi}{4}\text{Z}_a \otimes \text{Z}_b\right) \text{exp}\left( i\frac{\phi}{4}\right)
\end{aligned}
\end{equation}
\\
Furthermore, since every operator in \ref{CP_dissolution_eq} commutes, we conclude that any sequence of CP gates $(\text{CP}_{a_0 b_0}(\phi_0), .. ,\text{CP}_{a_m b_m}(\phi_m))$ on $n$ qubits can be represented by the unitary
\begin{equation}
\begin{aligned}
&\prod_{i = 0}^{m} \text{CP}_{a_i b_i}(\phi_i)\\
=&\text{exp}(i\Omega)\left(\prod_{i = 0}^{n-1}\text{P}_i(\omega_i)\right)\\
& \cdot \text{ exp}\left(-\frac{i}{2} \sum_{i = 0}^{n-1} \sum_{j = i+1}^{n-1} \chi_{ij} \text{Z}_i \otimes \text{Z}_j \right)\\
=&\text{exp}(i\Omega)\left(\prod_{i = 0}^{n-1}\text{P}_i(\omega_i)\right)\\
\cdot &\text{ exp}\left(-\frac{i}{2} \sum_{i = 0}^{n-1} \sum_{j = i+1}^{n-1} \chi_{ij} \mathrm{H}^{\otimes n} (\text{X}_i \otimes \text{X}_j) \mathrm{H}^{\otimes n} \right)\\
= &\text{exp}(i\Omega)\left(\prod_{i = 0}^{n-1}\text{P}_i(\omega_i)\right)\mathrm{H}^{\otimes n}\text{GMS}(\chi) \mathrm{H}^{\otimes n},
\end{aligned}
\end{equation}
where we used that for any linear operator $A$
\begin{equation}
\begin{aligned}
\mathrm{H}^{\otimes n} \text{exp}(i A) \mathrm{H}^{\otimes n} &= \mathrm{H}^{\otimes n} \sum_{k = 0}^{\infty}\frac{(iA)^k}{k!} \mathrm{H}^{\otimes n}\\
& = \sum_{k = 0}^{\infty}\frac{(i\mathrm{H}^{\otimes n}A \mathrm{H}^{\otimes n})^k}{k!}\\
&= \text{exp}\left( i\mathrm{H}^{\otimes n}A\mathrm{H}^{\otimes n}\right)
\end{aligned}
\end{equation}
The formulas for the parameters are
\begin{equation}
\begin{aligned}
\omega_i &= \sum_{\substack{k \leq m \\a_k = i \text{ or } b_k = i}} \frac{\phi_k}{2},\\
\chi_{ij} &= -\sum_{\substack{k \leq m \\a_k = i \text{ and } b_k = j}} \frac{\phi_k}{2},\\
\Omega &= \sum_{k \leq m} \frac{\phi_k}{4}.
\end{aligned}
\end{equation}
In case the hardware only supports uniform GMS gates, arbitrary CP sequences are no longer easily encodable, however, ascending CP sequences still are. For this, we write out the decomposition of a CP gate into CNOT gates:\\ \\

\begin{quantikz}
& \ctrl{1}		& \qw \\
& \gate{\text{P}(\phi)}& \qw
\end{quantikz}
=\begin{quantikz}
& \ctrl{1} & \gate{\text{P}\left(\frac{\phi}{2}\right)} & \ctrl{1} &\qw \\
& \targ{} & \gate{\text{P}\left(\frac{-\phi}{2}\right)} & \targ{} & \gate{\text{P}\left(\frac{\phi}{2}\right)}
\end{quantikz}\\ \\
Applying this to an ascending CP sequence yields\\
\begin{center}
\begin{quantikz}
\qw 	& \ctrl{1}						&\ctrl{2} 					&\ctrl{3}		 			&\ctrl{4}						& \qw\\
\qw 	& \gate{\text{P}(\phi_1)}	 	&\qw 						&\qw						&\qw							& \qw\\
\qw		& \qw 							& \gate{\text{P}(\phi_2)} 	&\qw						&\qw							& \qw\\
\qw 	& \qw 							& \qw						&\gate{\text{P}(\phi_3)}	&\qw 							& \qw\\
\qw 	& \qw  							& \qw						&\qw 						&\gate{\text{P}(\phi_4)}		& \qw\\
\end{quantikz}\\
=\\\vspace{0.5cm}
\begin{quantikz}
&\ctrl{4} 	& \gate{\text{P}\left(\sum_i \frac{\phi_i}{2}\right)}	& \ctrl{4} 	&\qw 											&\qw \\
&\targ{} 	& \gate{\text{P}\left(\frac{-\phi_1}{2}\right)}		 	& \targ{} 	&\gate{\text{P}\left(\frac{\phi_1}{2}\right)}	&\qw \\
&\targ{}	& \gate{\text{P}\left(\frac{-\phi_2}{2}\right)} 		& \targ{} 	&\gate{\text{P}\left(\frac{\phi_1}{2}\right)}	&\qw \\
&\targ{} 	& \gate{\text{P}\left(\frac{-\phi_3}{2}\right)} 		& \targ{}	&\gate{\text{P}\left(\frac{\phi_1}{2}\right)}	&\qw \\
&\targ{} 	& \gate{\text{P}\left(\frac{-\phi_4}{2}\right)}  		& \targ{}	&\gate{\text{P}\left(\frac{\phi_1}{2}\right)}	&\qw
\end{quantikz}
\end{center}
Such sequences of CNOT gates, with a single control and multiple targets are called \textit{fan-out} gates and can be realized with two uniform GMS gates according to \cite{Zeng2005}. Since we need two fan-out gates, we end up with 4 GMS gates per ascending CP sequence.
\subsection{Complexity analysis}

To estimate the performance of the processing of arithmetic operands of magnitude $N \in \mathbb{N}$, we separate the SBP-encoder into three steps
\begin{enumerate}
\item Initial H gates / Fourier transform (in the case of in-place operations)
\item (Controlled) $U_\mathrm{G}$-application
\item Fourier transform
\end{enumerate}
The initial H gates can be executed in constant time. The Fourier transform can be done in $\mathcal{O}(\text{log}(N)^2)$. If the hardware provides access to GMS gates, this reduces to $\mathcal{O}(\text{log}(N))$, as there is one ascending CP sequence per qubit.\\
In order to estimate the complexity of a single controlled $U_\mathrm{G}$ gate, we note that (un-)computing the truth value of the ancilla qubit (see section \ref{controlled_exec_U_G}) can be done in constant time. The next step is the application of an ascending CP sequence which has complexity $\mathcal{O}(\text{log}(N))$ for a CNOT based implementation and $\mathcal{O}(1)$ for GMS based implementations. We now estimate how many controlled $U_\mathrm{G}$ we have to apply. In the case of an addition the SB-polynomial is
\begin{align}
\Omega_{\text{add}}(x,y) = \sum_{i = 0}^{n_1-1} 2^i x_i + \sum_{j = 0}^{n_2-1} 2^j y_j,
\end{align}
which implies $\mathcal{O}(\text{log}(N))$ monomials. For multiplications we have
\begin{align}
\Omega_{\text{mult}}(x,y) = \left(\sum_{i = 0}^{n_1-1} 2^i x_i\right) \times \left( \sum_{j = 0}^{n_2-1} 2^j y_j \right).
\end{align}
This yields $\mathcal{O}(\text{log}(N)^2)$ monomials, as we have to iterate over every combination of $(i,j) \ i<n_1, j<n_2$. From this, we conclude that in the case of addtion, the controlled $U_\mathrm{G}$-application step can be performed in $\mathcal{O}(\text{log}(N)^2)$ (GMS: $\mathcal{O}(\text{log}(N))$) and in the case of multiplication we have $\mathcal{O}(\text{log}(N)^3)$ (GMS: $\mathcal{O}(\text{log}(N)^2)$).\\
The final Fourier transform has the same complexity as the initial thus leaving the complexity valid without modification. We summarize the results of our estimations in table \ref{complexity_estimation}.
\begin{table}[h]
\begin{tabular}{l | c c}
Operation & CNOT complexity & GMS complexity\\
\hline
Addition & $\mathcal{O}(\text{log}(N)^2)$ & $\mathcal{O}(\text{log}(N))$\\
Multiplication & $\mathcal{O}(\text{log}(N)^3)$ & $\mathcal{O}(\text{log}(N)^2)$\\
Deg. $k$ Polynomial & $\mathcal{O}(\text{log}(N)^{k+1})$ & $\mathcal{O}(\text{log}(N)^k)$\\
\end{tabular}
\caption{ \label{complexity_estimation} Gate-count complexity estimation of different arithmetic operations using either CNOT or global M\o lmer-S\o rensen gates as elementary entangling gates}
\end{table}
\vskip -3cm
\section{Critical points}
\label{sec:critical_points}
A valid point of criticism is the fact that ripple-carry adder based implementations of arithmetic evaluations scale as $\mathcal{O}(\text{log}(N))$ for additions and $\mathcal{O}(\text{log}(N)^2)$ for multiplications (check section \ref{sec:overview}) and therewith better compared to our approach. Apart from the possibility to reduce the complexity using GMS gates, this can be addressed by noting that the determined complexities are only valid regarding the gate count and not the depth. As it turns out (see fig \ref{depth_comparison_plot}) the depth (and with it the execution speed) is significantly lower than ripple-carry adder based implementations, since it can be reduced by one factor of $\text{log}(N)$ for each operation type.\\
To see how this works, we assume that the target register has a size of the order of $n_{\text{targ}} \approx \text{log}(N)$. We then allocate $n_{\text{targ}}$ ancilla qubits and apply the technique described in section \ref{parallel_U_G_execution}. This allows parallel execution of $n_{\text{targ}} \approx \text{log}(N)$ controlled $U_\mathrm{G}$ gates, implying an improvement in execution speed by this factor.\\
Another point of criticism is constituted by the fact that with growing factor size, the RZ gate phases in the $U_\mathrm{G}$ and QFT gates become exponentially small. Assuming only finite physical RZ gate precision this implies an upper limit for the operand size. For multiplications, the speed of our methods can however still be harnessed by reducing the problem into smaller multiplications using Karatsuba's divide-and-conquer algorithm \cite{Karatsuba1962}. This however comes at the cost of additional ancilla qubits for storing intermediate results.
\begin{figure*}
\begin{center}
\begin{subfigure}{.42\textwidth}
\includegraphics[width = \textwidth]{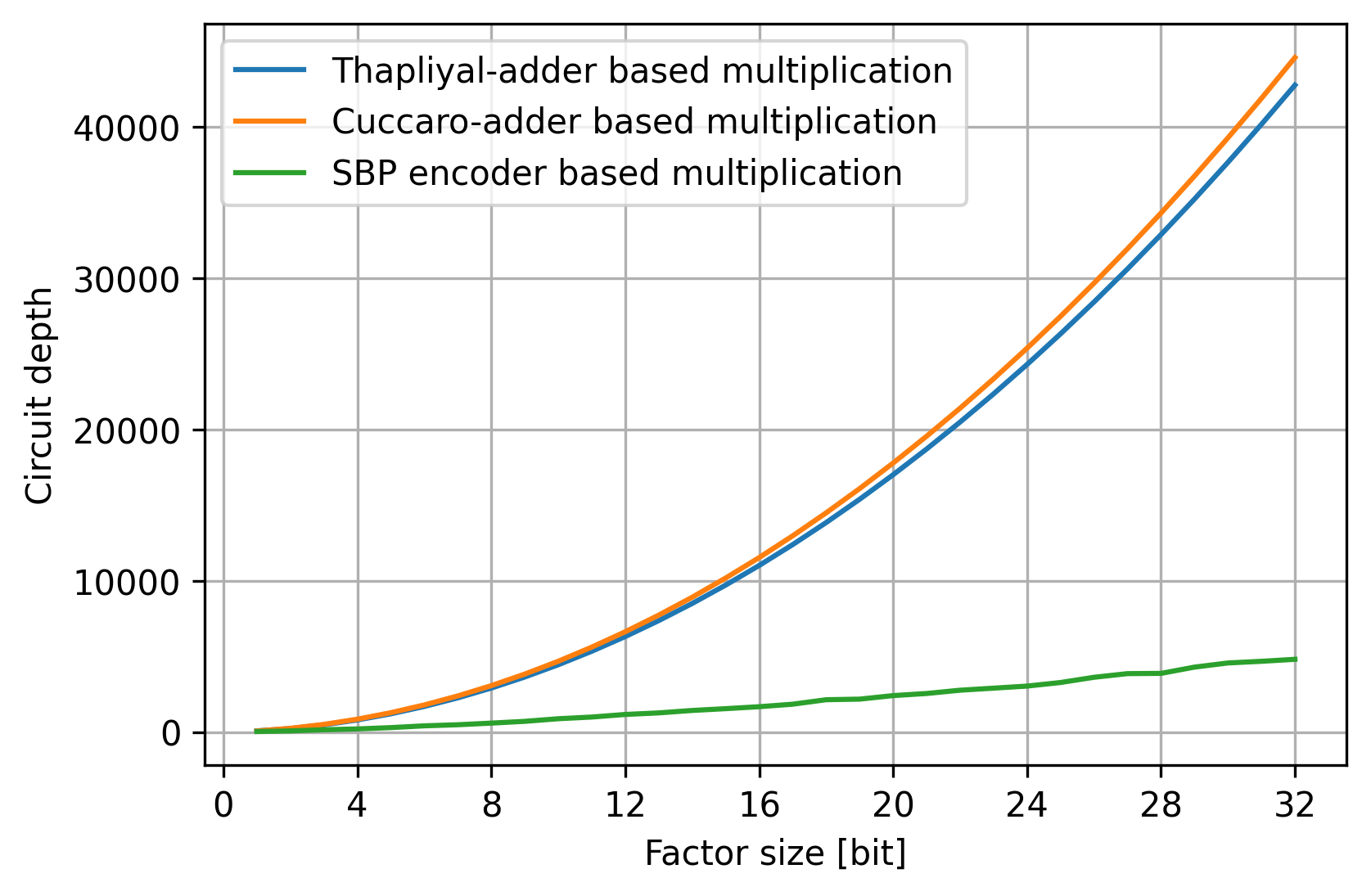}
\caption{\label{depth_comparison_plot}}
\end{subfigure}
\begin{subfigure}{.42\textwidth}
\includegraphics[width = \textwidth]{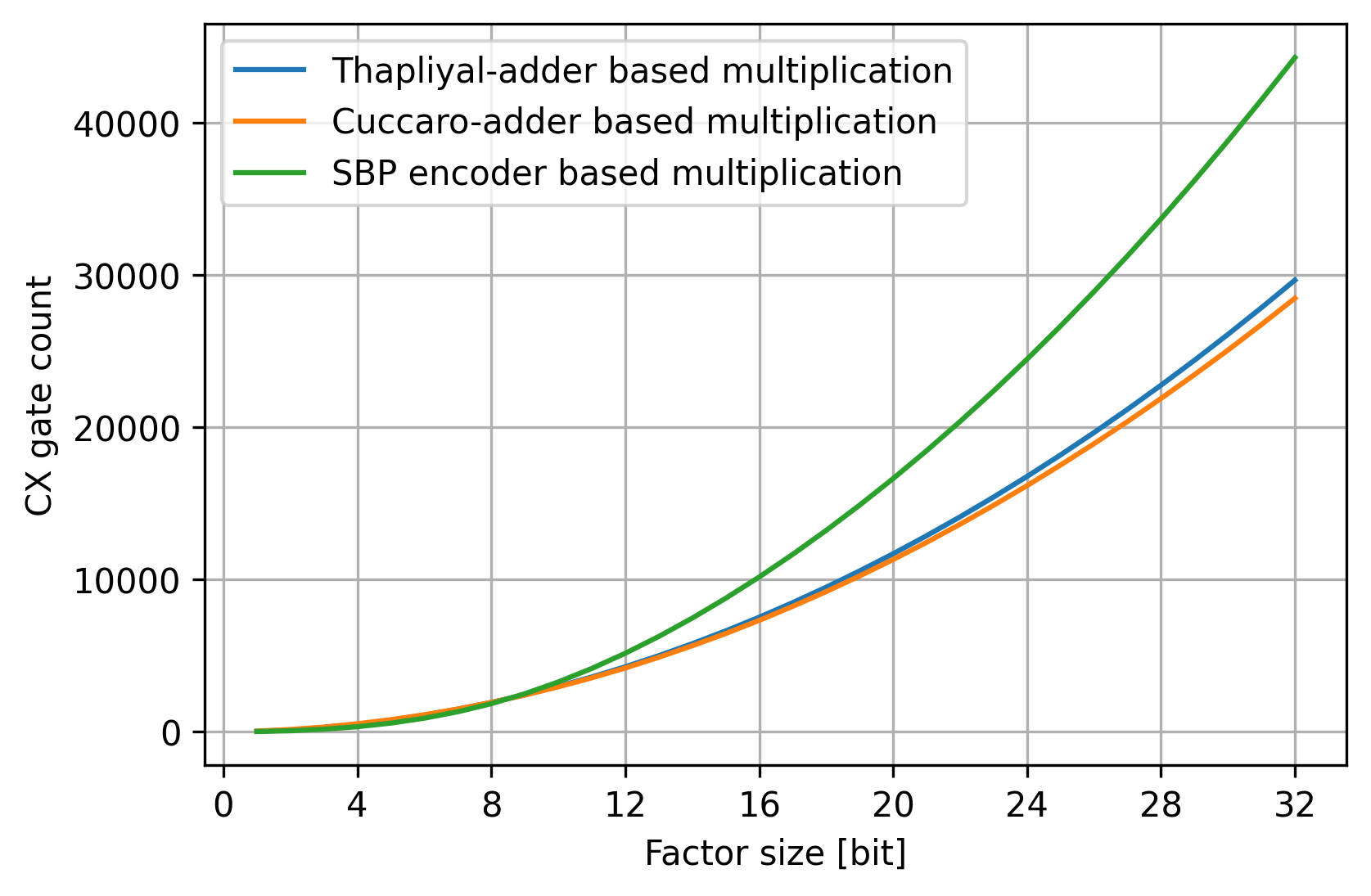}
\caption{\label{cx_gate_count_comparison_plot}}
\end{subfigure}
\begin{subfigure}{.42\textwidth}
\includegraphics[width = \textwidth]{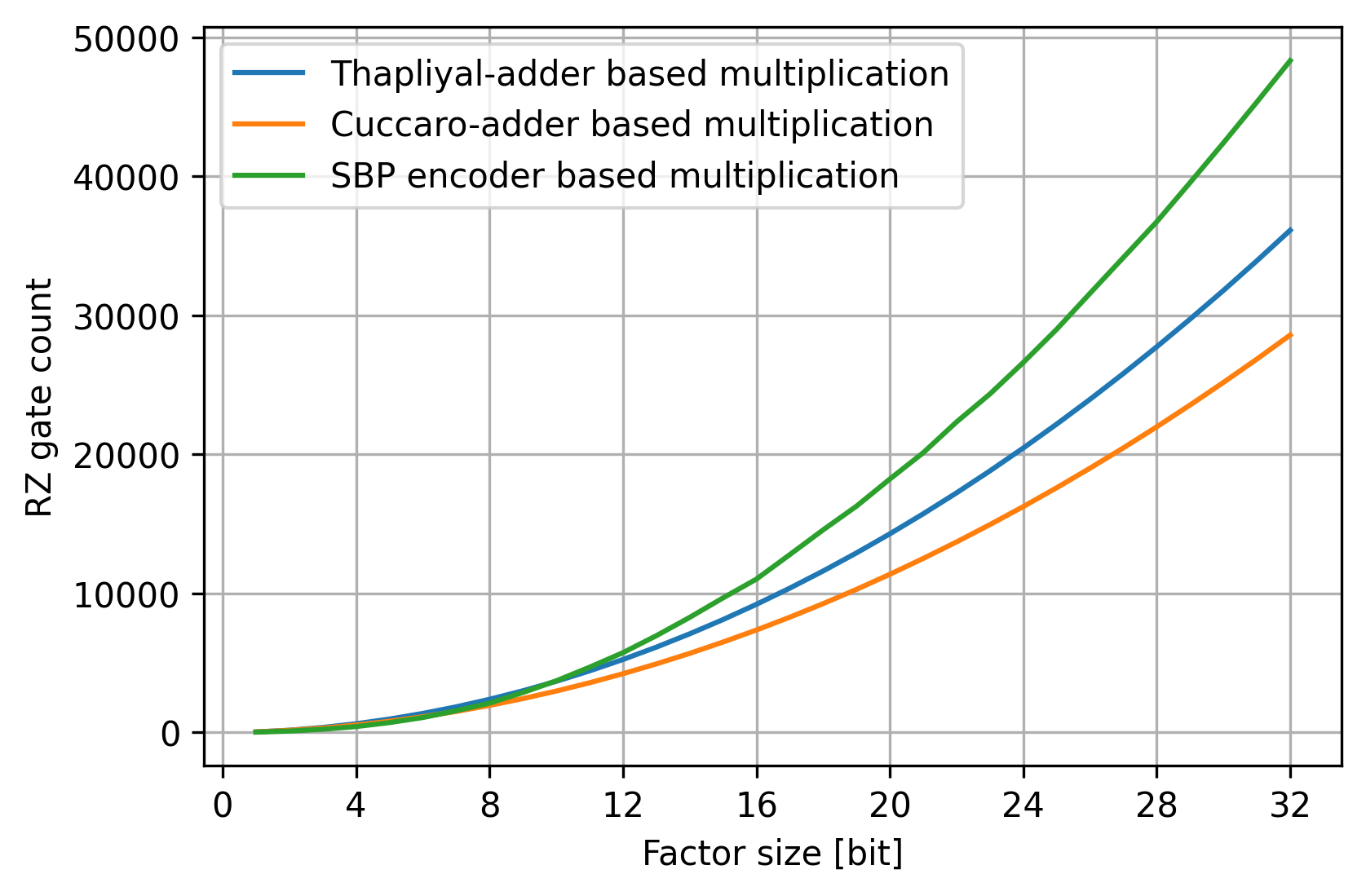}
\caption{\label{rz_gate_count_comparison_plot}}
\end{subfigure}
\begin{subfigure}{.42\textwidth}
\includegraphics[width = \textwidth]{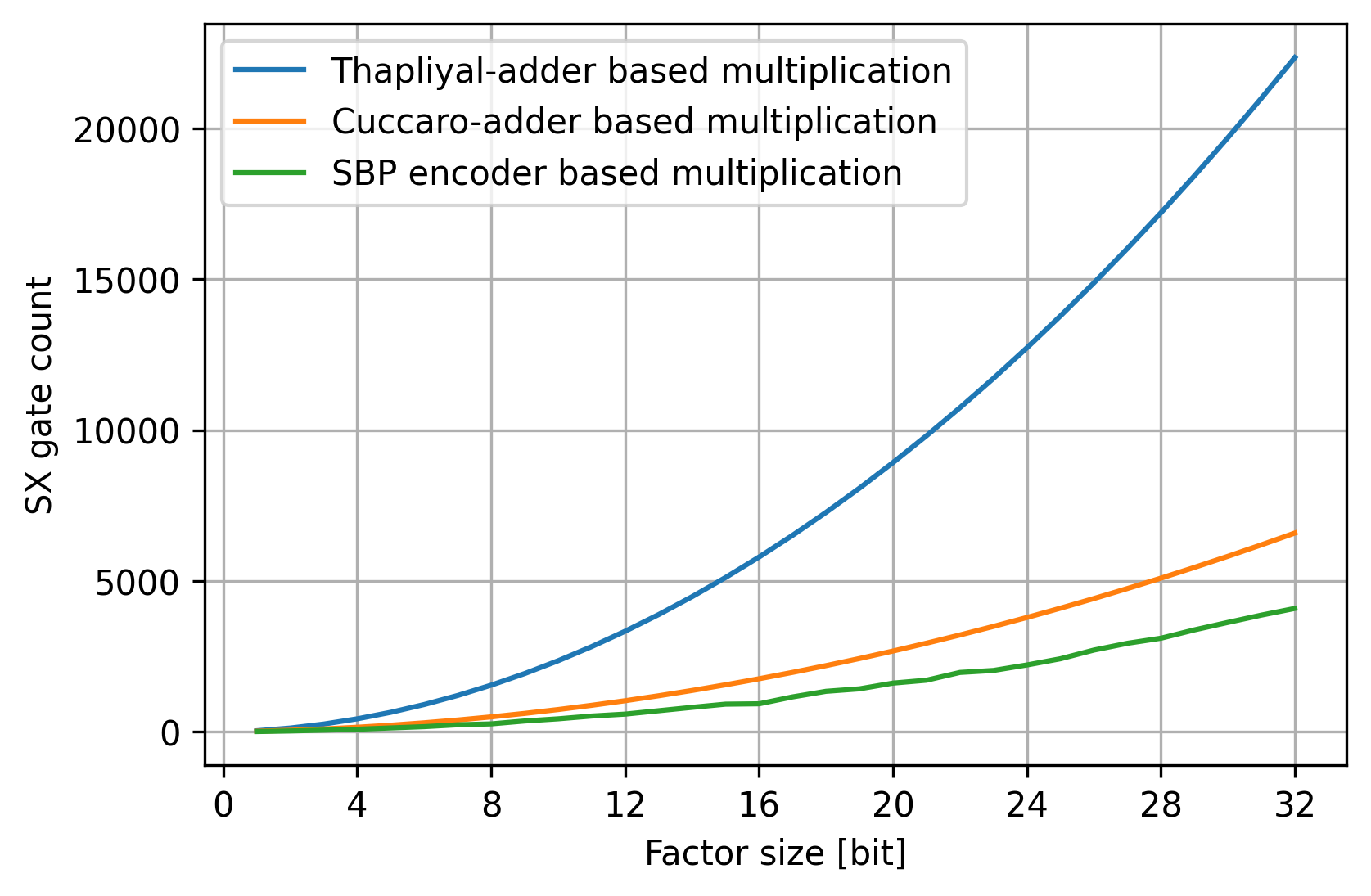}
\caption{\label{sx_gate_count_comparison_plot}}
\end{subfigure}
\end{center}
\caption{\label{performances_plots} \textbf{\ref{depth_comparison_plot}} Depth of the resulting circuits from an unsigned integer multiplication of our method compared to two other ripple-carry adder based multiplication approaches (\cite{Cuccaro2004}, \cite{Thapliyal2017}). For implementation specifics check section \ref{sec:overview}. At 32 bit, the resulting depth of the SBP encoder is only $10.7\%$ of the depth of the ripple-carry based approaches. Note that we chose the target register size as minimal as possible without risking overflow. The depths are calculated after transpiling the circuits into the gate set $\{ \text{CX}, \text{RZ}, \text{SX}\}$. The circuit construction, transpilation (optimization level 2) and evaluation was performed using IBM's \textit{Qiskit} \cite{Qiskit}. \textbf{\ref{cx_gate_count_comparison_plot}},\textbf{\ref{rz_gate_count_comparison_plot}},\textbf{\ref{sx_gate_count_comparison_plot}} Plot of the comparison of the gate counts of the previously mentioned approaches.}
\end{figure*}

\section{Conclusions}
\label{sec:conclusions}
In this work, we presented the design and handling of various types of arithmetic operations. For unsigned integer arithmetic, we constructed our circuits based on the idea that such evaluations can be written as SB-polynomials. Our way of encoding signed integers allows for an efficient operation evaluation since it can be reduced to the \textbf{un}signed integer arithmetic circuits. We furthermore extended our algorithm/encoding by the possibility to inter-operate between registers of different sizes. Regarding the representation of non-integers, it turns out that encoding both the mantissa and the exponent as quantum variables (\textit{bi-quantum encoding}) is in principle possible but only at the cost of highly increased circuit complexity. Instead, we store the exponent as a classical value (\textit{mono-quantum encoding}) which again enables us to profit from the efficient unsigned integer circuits.  Values encoded by our method can also be processed using other types of algorithms which perform modular arithmetic since there is no other specific requirement than the modular overflow behavior.\\
Subsequently, we demonstrate methods to perform in-place operations, which can both save on qubits and prevent the necessity of uncomputation operations.\\
Finally, we discuss several implementation improvements for the SB-polynomial encoder. Probably most outstanding is the possibility to perform every entangling operation using ion-trap native GMS gates. These gates allow the entanglement of more than two qubits within a single pulse, which enables a reduction in entangling gate count by a factor of $\mathcal{O}(\text{log}(N))$. We estimate our circuit complexity and find that CNOT based implementations of addition and multiplication circuits are asymptotically more expensive in gate count by one factor of $\mathcal{O}(\text{log}(N))$ compared to ripple-carry based approaches. This is however overshadowed by the fact that encoding SB-polynomials can be performed with many parallel gate executions. Using this we could show that regarding depth (and with that speed), our approach has no extra $\mathcal{O}(\text{log}(N))$ factor and furthermore provides a speed-up with a factor of more than 900\% (compared to carry ripple approaches).

\bibliographystyle{IEEEtran}
\bibliography{sources}

\end{document}